\newtheorem{lem}{Lemma}
\newtheorem{theorem}{Theorem}
\newtheorem{rem}{Remark}
\newtheorem{ass}{Assumption}
\def\mb{\mathbf}
\def\mc{\mathcal}
\DeclareMathOperator*{\argmin}{argmin}
\begin{document}
\title{Nonlinear Perturbation-based Non-Convex Optimization over Time-Varying Networks}

\author{Mohammadreza~Doostmohammadian, Zulfiya~R.~Gabidullina, and Hamid~R.~Rabiee,~\IEEEmembership{Senior~Member,~IEEE}
\IEEEcompsocitemizethanks{\IEEEcompsocthanksitem M.~Doostmohammadian is with the
Faculty of Mechanical Engineering, Semnan University, Semnan, Iran, and with Centre for International Scientific Studies and Collaborations, Tehran, Iran. E-mail: doost@semnan.ac.ir
\IEEEcompsocthanksitem Z. R. Gabidullina is with the Institute of Computational Mathematics and Information Technologies, Kazan Federal University, Russia, E-mail: Zulfiya.Gabidullina@kpfu.ru.
\IEEEcompsocthanksitem H. R. Rabiee is with the School of Computer Engineering, Sharif University of Technology, Tehran, Iran. E-mail: rabiee@sharif.edu
}
}

\markboth{IEEE Transaction on Network Science and Engineering}%
{Doostmohammadian \MakeLowercase{\textit{et al.}}: Nonlinear Perturbation-based Non-Convex Optimization over Time-Varying Networks}

\IEEEtitleabstractindextext{
\begin{abstract}
Decentralized optimization strategies are helpful for various applications, from networked estimation to distributed machine learning. This paper studies finite-sum minimization
problems described over a network of nodes and proposes a computationally efficient algorithm that solves distributed convex problems and optimally finds the solution to locally non-convex objective functions. In contrast to batch gradient optimization in some literature, our algorithm is on a single-time scale with no extra inner consensus loop. It evaluates one gradient entry per node per time. Further, the algorithm addresses link-level nonlinearity representing, for example, logarithmic quantization of the exchanged data or clipping of the exchanged data bits. Leveraging perturbation-based theory and algebraic Laplacian network analysis proves optimal convergence and dynamics stability over time-varying and switching networks. The time-varying network setup might be due to packet drops or link failures. Despite the nonlinear nature of the dynamics, we prove exact convergence in the face of odd sign-preserving sector-bound nonlinear data transmission over the links. Illustrative numerical simulations further highlight our contributions.
\end{abstract}
\begin{IEEEkeywords}
Distributed Algorithm, Locally Non-Convex Optimization, Network Science, Perturbation Theory
\end{IEEEkeywords}}
\maketitle
\IEEEdisplaynontitleabstractindextext
\IEEEpeerreviewmaketitle

\IEEEraisesectionheading{\section{Introduction}\label{sec_intro}}
\IEEEPARstart{D}{ata}
Many real-world systems are generically distributed over large multi-agent networks (or sensor networks). This, along with recent advances in parallel data processing and cloud computing, motivates decentralized data mining and machine learning solutions \cite{cao2021survey}. In contrast to centralized optimization techniques \cite{danilova2022recent}, the parallelized and distributed counterparts benefit from no-single-node-of-failure and faster parallel data processing over multiple nodes. Recent applications of interest include distributed estimation \cite{tnse_19}, reinforcement learning \cite{dai2020distributed}, data classification via support-vector-machine \cite{dsvm}, robotic network task assignment \cite{luo2015distributed,MiadCons}, federated learning \cite{li2023analysis}, distributed energy resource management \cite{doostmohammadian2023distributed,khojasteh2023distributed}, and linear/logistic regression \cite{qureshi2023distributed} among others. The interplay between network dynamics, agent interactions, and nonlinearity in communication links presents a rich and challenging research direction in these setups. These distributed strategies must address the real-world constraints of the communication and data exchange networks. For example, the exchanged data among the nodes might be subject to non-ideal (or nonlinear) constraints such as quantization and saturation (or clipping). Further, mobile robotic and multi-agent networks are generically dynamic, with nodes coming into and out of each other's communication (and broadcasting) range. This mandates dynamic and switching network consideration for parallelization of the information processing. More recently, non-convex optimization models \cite{danilova2022recent} are considered with applications in binary classification and regularization techniques. Moreover, the objective functions in many control applications are locally non-convex and require the proper design of the networked optimization dynamics. This work focuses on decentralized techniques to optimize (possibly) non-convex local objective functions over dynamic networks subject to link nonlinearity.

\textit{Literature review:}
First-order methods based on gradient descent are primarily proposed in \cite{nedic2009distributed,kar2012distributed,chen2012diffusion,mateos2014distributed,dsvm} for convex problems. Distributed and decentralized optimization in the presence of malicious agents \cite{zhang2023accelerated}, coupled with a feasibility constraint \cite{doostmohammadian2024accelerated,alghunaim2019distributed,ojsys,lian2023distributed,scl}, subject to uniform data quantization \cite{rikos2023distributed}, with fixed-time convergence \cite{garg2020fixed1,garg2020fixed}, and over dynamic balanced digraphs subject to link failure \cite{ddsvm} are considered in the literature.
Most of these existing works focus on \textit{convex} models. Few analytical works study non-convex finite-sum models \cite{xin2021fast,tatarenko2017non,vlaski2021distributed,kao2023localization} and min-max optimization (or saddle-point problems) \cite{qureshi2023distributed,mokhtari2020unified,yang2022faster} aiming to maximize the objective
in one direction and minimize in the other. Other existing non-convex optimization solutions include primal-dual \cite{scutari2016parallel,hajinezhad2016nestt,barazandeh2021decentralized} and gradient-free (or zeroth-order) algorithms \cite{yi2022zeroth,bogunovic2018adversarially,liu2020primer,maheshwari2022zeroth,zhang2022accelerated}. Some other literature focus on double-timescale (with inner consensus loop) solutions \cite{xin2022fast,sun2020improving},
mandating periodic \textit{batch} gradient evaluations along with component gradient computations at each iteration, and thus, suffer from periodic network synchronizations. The gap in the current non-convex optimization literature is on lack of considering non-ideal (or nonlinear) constraints such as logarithmic quantization or clipping while addressing the time-varying (dynamic) nature of the underlying multi-agent network in the case of link failures or information loss.

\textit{Contributions:}
In this work, we study non-convex distributed optimization problems by adopting gradient tracking (GT) techniques and introducing an auxiliary variable. The proposed algorithm can tune the agreement rate (consensus) on the state values versus the GT step rate via specific parameters. The algorithm is on a single-timescale and, thus, is computationally efficient in contrast to the double-timescale scenarios. We further consider non-linear (for example, logarithmically quantized or clipped) data exchange over the links connecting the nodes. This is to account for possible non-ideal linking conditions in real-world distributed setups. This proposed methodology addresses many nonlinearities over the network that satisfy certain odd sign-preserving sector-bound assumptions. Due to this nonlinear model, we combine matrix perturbation theory and graph spectral analysis to prove convergence. This further allows one to address convergence over possibly time-varying and switching network topologies among the agents. This is the case, for example, in mobile sensor networks and swarm robotics, where dynamic network consideration is a must. Moreover, this proposed technique addresses convergence in link failure (e.g., because of information loss or packet drops). We perform extensive simulations to verify our results over optimization and learning setups. To the best of our knowledge, the intersection of optimization, dynamic networks, and consensus dynamics capable of addressing the complexities posed by data transmission subject to log-quantization (and general sector-bound nonlinearity) is not considered in the literature. This work opens possible new directions in the machine learning and data mining scenarios addressing real-world nonlinear constraints in contrast to the existing literature.

\textit{Paper organization:} Section~\ref{sec_prob} formulates the finite-sum non-convex optimization problem. Section~\ref{sec_alg} presents our distributed optimization algorithm. Section~\ref{sec_conv} presents the perturbation-based convergence analysis. Section~\ref{sec_sim} provides extensive illustrative simulations, and Section~\ref{sec_con} concludes the paper.

\textit{Notations:} Vectors $\mb{1}_n,\mb{0}_n$ denote the all $1,0$ column vectors of size $n$ respectively. $I_n$ is the identity matrix of size $n$. Operator $\otimes$ denotes the Kronecker product of matrices. $\nabla f(\mb{x})$ and $\nabla^2 f(\mb{x})$ are the gradient and second derivative of $f(\mb{x})$ with respect to $\mb{x}$. Operator $\partial_t$ denotes $\frac{d}{dt}$. $\lVert A\rVert_{\infty}$ denotes the infinity norm of matrix $A=[a_{ij}]$ defined as $\lVert A\rVert_{\infty} = \max_{1\leq i\leq n} \sum_{j=1}^n |a_{ij}|$. RHP and LHP stand for right-half-plane and left-half-plane, respectively. Operator ``$\succ$'' denotes matrix positive definiteness.

\section{Optimization Problem Formulation}\label{sec_prob}
The problem is described over a network $\mc{G}_\gamma=\{\mc{V},\mc{E}_\gamma\}$ with set of nodes/agents $\mc{V}$ of size $n$ and communications over the link set $\mc{E}_\gamma$. The \textit{switching} signal $\gamma: t \mapsto \Gamma$ assigns the network topology at time $t$ from a finite set $\Gamma$. A link $(a,b) \in \mc{E}_\gamma$ implies a data-exchange channel from node $a$ to node $b$ over the time-varying graph $\mc{G}_\gamma$; this defines the neighboring set as $a \in \mc{N}_b$. Every node $i$ have a set of local smooth objective functions $\{f_{i,j}: \mathbb{R}^p \mapsto \mathbb{R}\}_{j=1}^m$ that might be \textit{non-convex}. This setting is motivated by a data analysis framework where the $j$th data sample incurs the $j$th cost function. Then, the problem is to optimize the following objective function,
\begin{align}
\min_{\mb{x} \in \mathbb{R}^{p}} &
F(\mb x) = \frac{1}{n}\sum_{i=1}^{n} f_i(\mb{x}), \label{eq_prob0}
\end{align}
with $\mb x$ as the global state variable.
Note that the coupling between the nodes can be moved from the objective to the \textit{consensus-consistency constraint}, i.e., in a distributed setup, the problem can be reformulated as \cite{dsvm,rikos2023distributed},
\begin{align}\nonumber
\min_{\mb{x} \in \mathbb{R}^{p}} &
F(\mb x) = \frac{1}{n}\sum_{i=1}^{n} f_i(\mb{x}_i)\\\label{eq_prob}
\text{subject to}& ~ \mb{x}_1 = \mb{x}_2 = \dots = \mb{x}_n,
\end{align}
where the global column vector is $\mb{x}=[\mb{x}_1;\mb{x}_2;\dots;\mb{x}_n]$ (with operator ``;'' as the column concatenation), the parameter $\mb{x}_i$ is the state at node $i$ (representing the local state assessment at node $i$ that must reach consensus on), and the local objectives are in the form
\begin{align}\label{eq_fij}
f_i(\mb{x}_i) = \frac{1}{m}\sum_{j=1}^{m} f_{i,j}(\mb{x}_i).
\end{align}
This cost model implies that the nodes agree on a stationary state of the average of all constituent local functions via localized information processing and communications. An example application is given in Section~\ref{sec_sim}, in which the local state parameter $\mb{x}_i$ denotes the regressor line parameter at node $i$ based on its local data. 

Define $H:=\mbox{diag}[\nabla^2 f_i(\mb{x}_i)]$ as the Hessian matrix. \textit{ The local cost functions $f_i(\mb{x}_i)$ might be non-convex, i.e., $\nabla^2 f_i(\mb x_i)$ is not necessarily positive.} This implies that matrix $H$ \textit{is not necessarily positive semi-definite (PSD)}.

\begin{ass} \label{ass_cost}
The global cost function is assumed to satisfy the following,
\begin{align}\label{eq_H1}
(\mb{1}_n \otimes I_p)^\top H (\mb{1}_n \otimes I_p) \succ 0.
\end{align}
It is further assumed that all the local costs are differentiable and Lipschitz. The global objective satisfies $F^* := \inf_{\mb{x} \in \mathbb{R}^{p}} F(\mb{x}) > -\infty$ with $F^*$ as the optimal value.
\end{ass}
\begin{rem}
The Assumption~\ref{ass_cost} implies that, although the local cost functions $f_i(\cdot)$ and the global cost function $F(\cdot)$ might be non-convex, the global cost $F(\cdot)$ has no local minimum and it only has a global minimum. However, the local cost functions $f_i(\cdot)$ may have a local minimum (an example is given later in Fig.~\ref{fig_nonconv} in Section~\ref{sec_sim_nonconv}). Similar assumption is made in some recent works \cite{xin2021improved,xin2021fast,fazel2018global,karimi2016linear,zhao2022beer,kao2024localization,mancino2023decentralized,lin2024stochastic}.
\end{rem}
\begin{ass}
The multi-agent network with adjacency matrix $W_\gamma$ is assumed to be directed, strongly connected, and weight-balanced (WB), i.e., $\sum_{i=1}^n w^\gamma_{ij}= \sum_{j=1}^n w^\gamma_{ij}$. We consider non-negative entries satisfying
\begin{align} \label{eq_sum_wij}
\sum_{j=1}^n w_{ij}^\gamma<1.
\end{align}
\end{ass}
The Laplacian matrix associated with the graph topology is represented by $\overline{W}_\gamma=\{\overline{w}^\gamma_{ij}\}$ with entries defined as $\overline{w}^\gamma_{ij}=w^\gamma_{ij}$ for $i\neq j$ and $\overline{w}^\gamma_{ij}=-\sum_{i=1}^n w^\gamma_{ij}$ for $i=j$.
It is known from \cite{SensNets:Olfati04,olfatisaberfaxmurray07} that for a strongly connected WB graph, the zero eigenvalue of $\overline{W}_\gamma$ is simple with left (and right) eigenvector $\mb{1}_n^\top$ (and~$\mb{1}_n$), i.e., $\mb{1}_n^\top \overline{W}_\gamma= \mb{0}_n$ and~$\overline{W}_\gamma \mb{1}_n=\mb{0}_n$. The other non-zero eigenvalues of $\overline{W}_\gamma$ are all in the LHP. Note that for undirected $\mc{G}_\gamma$, all the eigenvalues of the matrix $\overline{W}_\gamma$ are real; one can generalize this by considering \textit{mirror} graphs for strongly-connected \textit{directed} networks.

\section{The Proposed Algorithm over Switching Networks}\label{sec_alg}
To solve the distributed problem~\eqref{eq_prob}-\eqref{eq_fij} we propose a nonlinear perturbation-based gradient-tracking (\textbf{NP-GT}) solution as follows:
\begin{align} \label{eq_xdot_g} 
\dot{\mb{x}}_i &= -\sum_{j=1}^{n} w_{ij}^\gamma (h_l(\mb{x}_i)-h_l(\mb{x}_j))-\eta \mb{y}_i, \\ \label{eq_ydot_g}
\dot{\mb{y}}_i &= -\sum_{j=1}^{n} w_{ij}^\gamma (h_l(\mb{y}_i)-h_l(\mb{y}_j) ) + \partial_t \nabla f_i(\mb{x}_i),
\end{align}
With $\eta$ as GT step-rate, which denotes the rate at which the proposed solution tracks the average gradient at neighboring nodes, $\mb{y}_i$ as the auxiliary GT variable at node $i$, $h_l:\mathbb{R}^p \mapsto \mathbb{R}^p$ as the nonlinear function affecting the information sent over the links. One main difference with the existing GT-based solution is due to this nonlinearity consideration, which allows the addressing of non-ideal data exchange over the links due to, for example, quantization or saturation. The other feature of the proposed solution, as compared to other existing GT-based and ADMM-based \cite{cdc_dtac,mota2013d,mancino2023decentralized} solutions, is its resilience to change in the network topology. The solution is summarized in Algorithm~\ref{alg_1}.
\begin{algorithm} \label{alg_1}
\textbf{Given:} $f_{i,j}(\mb{x}_i)$, $\mc{G}_\gamma$, $W_\gamma$, $\eta$ \\ 
\textbf{Initialization:} ${\mb{y}}_i(0)=\mb{0}_{p}$, random ${\mb{x}}_i(0)$
\\
\While{termination criteria NOT true}{
Node $i$ finds local gradient $\boldsymbol{ \nabla} f_i(\mb{x}_i)$ \;
Node $i$ receives $\mb{x}_j$ and $\mb{y}_j$ from $j \in \mc{N}_i$ \;
Node $i$ carries out the calculations by Eqs.~\eqref{eq_xdot_g}-\eqref{eq_ydot_g} \;
Node $i$ shares updated $\mb{x}_i$ and $\mb{y}_i$ over $\mc{G}_\gamma$ \;
}
\textbf{Return:} optimal $\mb{x}^*$ and $F^*$\; 
\caption{\textbf{NP-GT} at node $i$. }
\end{algorithm}

One key point of the proposed algorithm is that it only needs WB networks instead of weight-stochastic networks. WB condition is known to be milder and more easily satisfied than row/column-stochastic condition. This is particularly important in the case of a link failure, as illustrated in Fig.~\ref{fig_linkremov}. For weight-stochastic-based distributed optimization works \cite{xin2021fast,tatarenko2017non,vlaski2021distributed,kao2023localization,xin2021improved,scutari2016parallel,hajinezhad2016nestt,barazandeh2021decentralized,xin2022fast,sun2020improving},
it is required to redesign the weights to satisfy stochasticity after link failure (or any change in the network topology). This is done via weight compensation algorithms introduced in \cite{6426252,cons_drop_siam} that add more complexity to the existing distributed optimization literature. However, satisfying the WB condition is much easier after a link failure without any complicated algorithm. This feature makes our solution more resilient to changes in the network topology and possible link failure. The other key feature in this regard is the perturbation-based convergence analysis discussed in detail later in Section~\ref{sec_conv}. 
\begin{figure} [hbpt]
\centering
\includegraphics[width=1.5in]{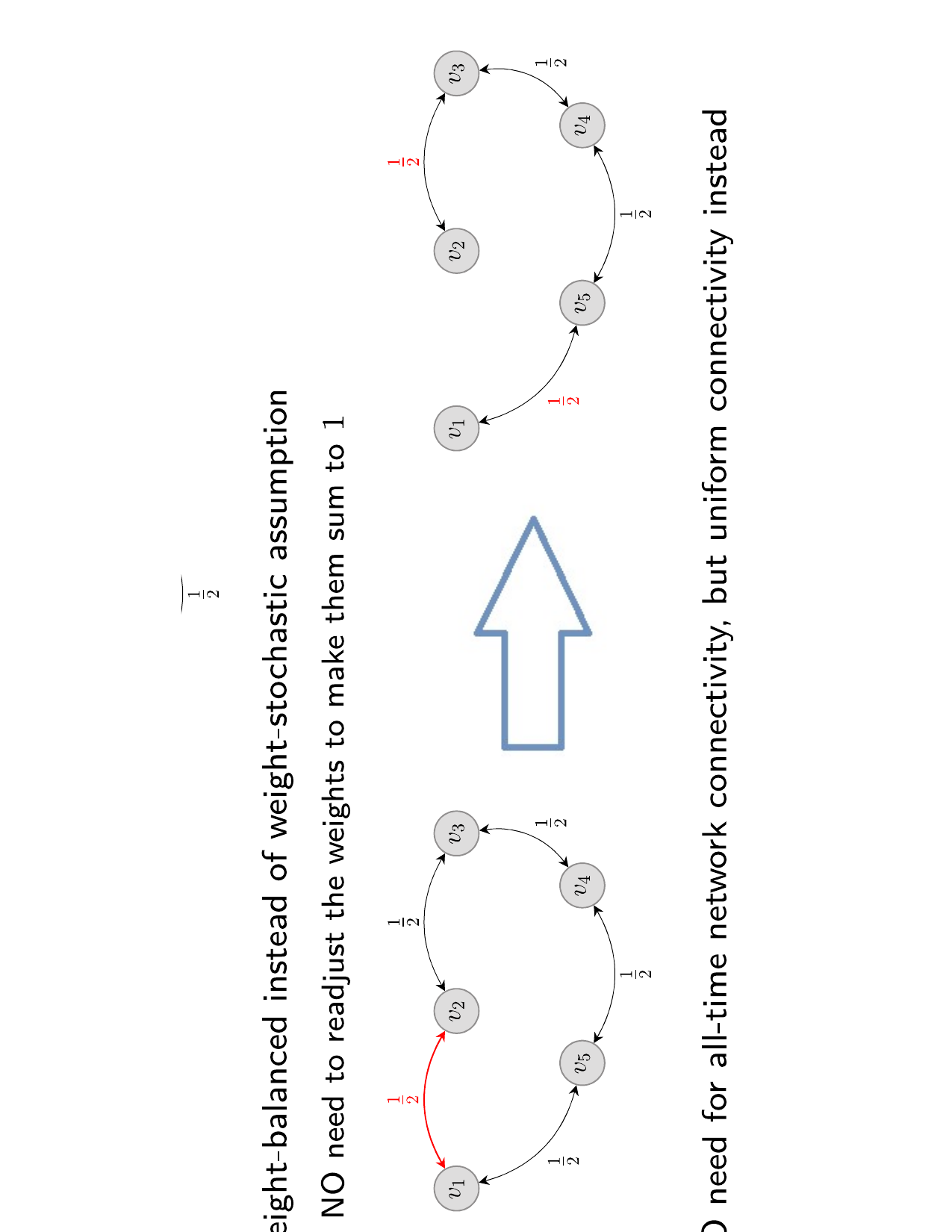}
\includegraphics[width=1.5in]{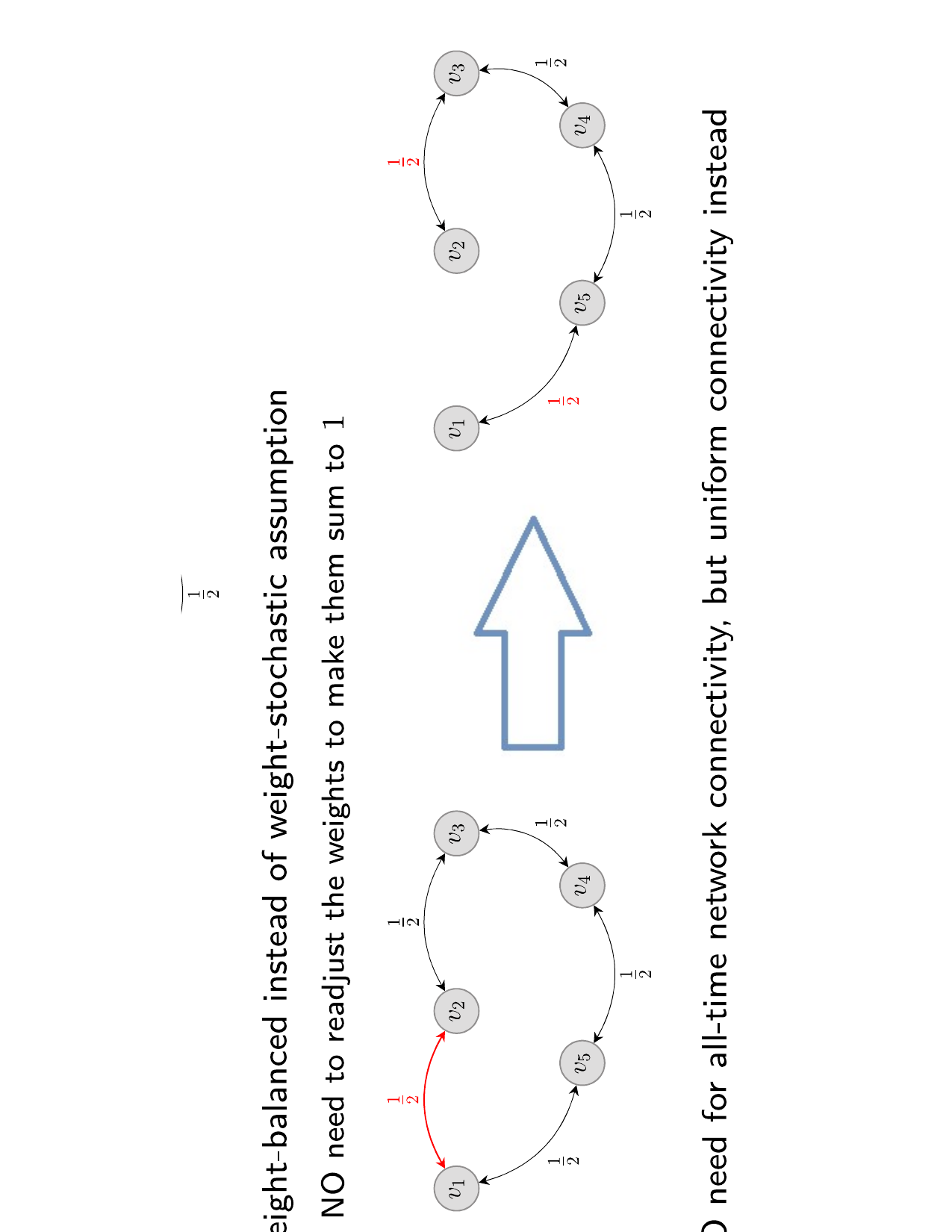} 
\caption{The left figure shows an undirected network (stochastic and WB) with an unreliable red-colored link. After the red link fails, the resulting network is not stochastic anymore but is still WB. This shows that the WB condition is milder than the stochastic condition. Therefore, our optimization solution converges over such unreliable networks, while much of the existing literature does not converge without redesigning the stochastic weights.}
\label{fig_linkremov}
\end{figure}

The function $h_l(\cdot)$ is odd, sign-preserving, and monotonically non-decreasing. Define the sector-bounds for such nonlinear mapping $h_l$ as $0<\kappa \leq \frac{h_l(z)}{z} \leq \mc{K}$. These bounds imply that the nonlinear function $h_l(z)$ is lower and upper-bounded by lines $\kappa z$ and $\mc{K} z$. One example of such a nonlinear mapping is \textit{log-scale quantization} defined as
$$ h_l(z) = \mbox{sgn}(z)\exp\left(\rho\left[\dfrac{\log(|z|)}{\rho}\right] \right),$$
where $[\cdot]$ denotes rounding to the nearest integer, and $\mbox{sgn}(\cdot)$, $\mbox{exp}(\cdot)$, and $\mbox{log}(\cdot)$ are the sign, exponential, and logarithmic functions, respectively. With $\rho$ as the logarithmic quantization level, we have $\kappa = 1-\dfrac{\rho}{2} \leq \dfrac{h_l(z)}{z}\leq 1+\dfrac{\rho}{2} = \mc{K}$. This implies that the log-quantization satisfies the assumption of sector-bound nonlinear mapping. This is better illustrated by Fig.~\ref{fig_logquant}.
\begin{figure}
\centering
\includegraphics[width=2.25in]{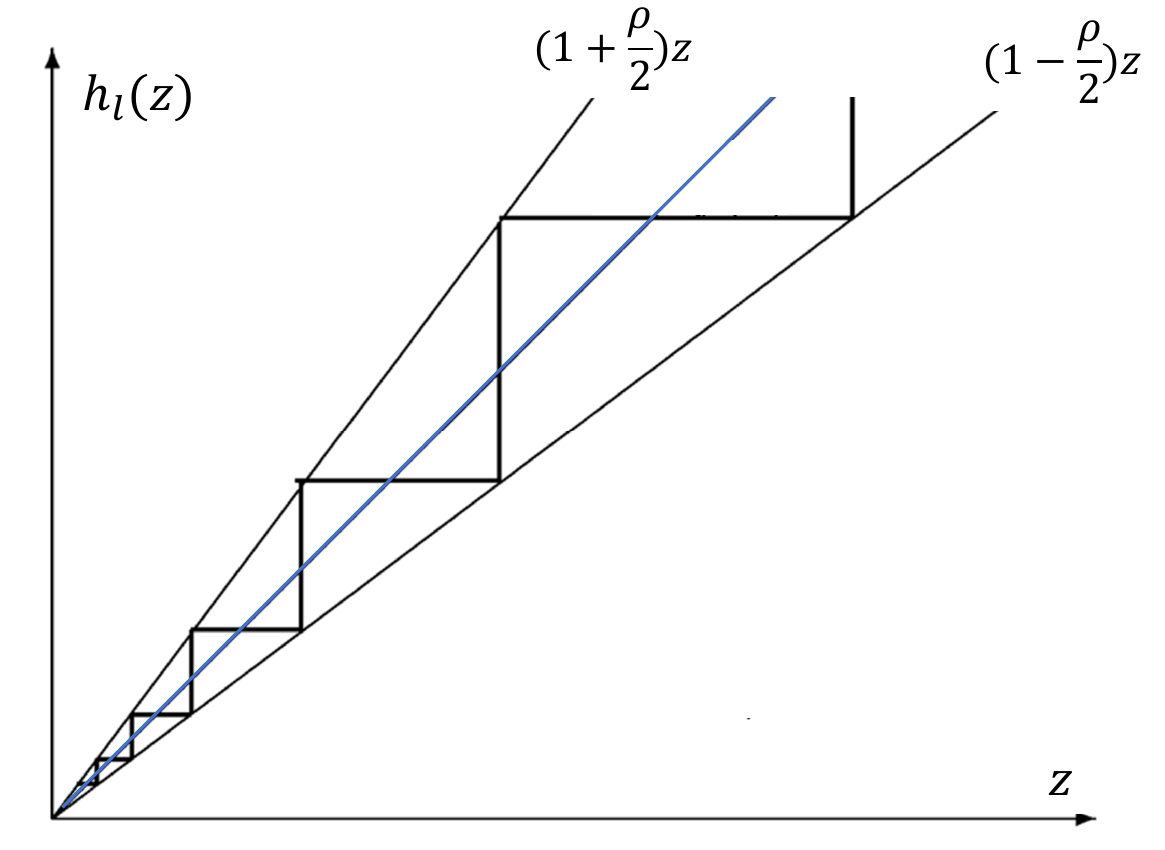} 
\caption{This figure shows the logarithmic quantization as an example of nonlinear mapping satisfying the sector-bound condition. For the quantization level $\rho$ the lines $1 \pm \frac{\rho}{2}$ bound the nonlinear function. } \label{fig_logquant}
\end{figure}
Further, note that the WB assumption ensures that under~\eqref{eq_xdot_g} and~\eqref{eq_ydot_g}, it is fulfilled:
\begin{align} \label{eq_sumydot}
\sum_{i=1}^n \dot{\mb{y}}_i
= \sum_{i=1}^n \partial_t \nabla f_i(\mb{x}_i), ~~
\sum_{i=1}^n \dot{\mb{x}}_i
= -\eta \sum_{i=1}^n\mb{y}_i.
\end{align}
Setting the initial condition as~$\mb{y}(0)=\mb{0}_{np}$, the GT nature of the dynamics straightly follows as
\begin{eqnarray} \label{eq_sumxdot2}
\sum_{i=1}^n \dot{\mb{x}}_i = -\eta \sum_{i=1}^n\mb{y}_i = -\eta \sum_{i=1}^n \nabla f_i(\mb{x}_i),
\end{eqnarray}
The consensus-type GT-based dynamics allow for considering odd sign-preserving model nonlinearities without violating the GT property. This particularly advances the recent primal-dual formulations \cite{scutari2016parallel,hajinezhad2016nestt,barazandeh2021decentralized} with no such considerations of real-world nonlinearity.
Then, for any random initial values~$\mb{x}(0) \notin \mbox{span} \{\mb{1}_{np}\}$ and~$\mb{y}(0)=\mb{0}_{np}$, at the optimizer~${\mb{x}=\mb{x}^*=\mb{1}_n \otimes \overline{ \mb{x}}^*}$, we have $\sum_{i=1}^n \dot{\mb{x}}_i = -\eta (\mathbf 1_n^\top \otimes I_p) \nabla F(\mb{x}^*) = \mb{0}_p$. Recalling the proposed dynamics~\eqref{eq_xdot_g}-\eqref{eq_ydot_g}, the optimizer further satisfies $\dot{\mb{x}}^*_i = \mb{0}_p$ and $\dot{\mb{y}}_i^* = \partial_t \nabla f_i(\overline{ \mb{x}}^*) = \nabla^2 f_i(\overline{ \mb{x}}^*) \dot{\mb{x}}^*_i = \mb{0}_p$. This implies that the point $[\mb{x}^*;\mb{0}_{np}]$ is the (invariant) equilibria of Eqs.~\eqref{eq_xdot_g}-\eqref{eq_ydot_g}.

One can linearize Eqs.~\eqref{eq_xdot_g}-\eqref{eq_ydot_g} at every time $t$. Then, recalling the sector-bound condition on the nonlinearity, the \textbf{NP-GT} dynamics in compact formulation can be rewritten as
\begin{align} \label{eq_xydot1}
\left(\begin{array}{c} \dot{\mb{x}} \\ \dot{\mb{y}} \end{array} \right) &= A_h(t,\eta,\gamma) \left(\begin{array}{c} {\mb{x}} \\ {\mb{y}} \end{array} \right), \\ \label{eq_M_g}
A_h(t,\eta,\gamma) &= \left(\begin{array}{cc} \overline{W}_{\gamma,\Xi} \otimes I_p & -\eta I_{np} \\ H(\overline{W}_{\gamma,\Xi}\otimes I_p) & \overline{W}_{\gamma,\Xi} \otimes I_p - \eta H
\end{array} \right),
\end{align}
with time-varying compact matrix $A_h(t,\eta,\gamma)$ as a function of $\eta$, switching signal $\gamma$, and $\Xi$ as the linearization of $h_l(\cdot)$ at the operating point $t$.
Setting $h_l(z) = z$ (i.e., the linear case), the dynamics matrix can be written as $A := A^0 + \eta A^1$ with
\begin{eqnarray}\nonumber
A^0 &=& \left(\begin{array}{cc} \overline{W}_\gamma \otimes I_p & \mb{0}_{np\times np} \\ H(\overline{W}_\gamma \otimes I_p) & \overline{W}_\gamma \otimes I_p \end{array} \right),\\\nonumber
A^1 &=& \left(\begin{array}{cc} \mb{0}_{np \times np} & - {I_{np}} \\ {\mb{0}_{np\times np}} & - H \end{array} \right).
\end{eqnarray}
Considering the nonlinearity at the links (i.e., setting $h_l(z) \neq z$), the following arguments hold for the compact forms
\eqref{eq_xydot1}-\eqref{eq_M_g}:
\begin{align} \label{eq_Mg}
A_h(t,\eta,\gamma) &= A_h^0 + \eta A^1, \\ \label{eq_beta_M0}
\kappa A^0 & \preceq A_h^0 \preceq \mc{K} A^0, \\ \label{eq_beta_M}
A_h^0 = \Xi(t) A^0 &,~ \kappa I_n \preceq \Xi(t) \preceq \mc{K} I_n,
\end{align}
where $A_h^0$ is the counterpart of $A^0$ for the nonlinear dynamics, $\Xi(t) := \mbox{diag}[\xi(t)]$, column vector $\xi(t) = [\xi_1(t);\xi_2(t);\dots;\xi_n(t)]$ with $\xi_i(t) = \frac{h_l(\mb{x}_i)}{\mb{x}_i}$ (or one can write $h_l(\mb{x}(t)) = \Xi(t) \mb{x}(t)$). Recall that the considered nonlinear mapping $h_l(\cdot)$ satisfies $\kappa \leq \xi_i(t) \leq \mc{K}$; this implies that the following holds:
\begin{align}
\overline{W}_{\gamma,\Xi} = \overline{W}_{\gamma} \Xi(t).
\end{align}
Therefore, one can relate the eigen-spectrum of the linear case, denoted by $\sigma(\overline{W}_{\gamma})$, to that of the nonlinear case, denoted by $\sigma(\overline{W}_{\gamma,\Xi})$. Following from the eigen-spectrum definition and the fact that $\Xi(t)$ is a diagonal matrix, one can write the determinant formulation as
\begin{align}
\mbox{det}(\overline{W}_{\gamma,\Xi}-\lambda I_{np}) = \mbox{det}(\overline{W}_{\gamma}-\lambda \Xi(t)^{-1}),
\end{align}
where $\lambda$ denotes the eigenvalue. Therefore, from \eqref{eq_beta_M0}-\eqref{eq_beta_M} we have
\begin{align} \label{eq_spect_k}
\kappa \sigma(A^0) \leq \sigma(A^0_h) \leq \mc{K} \sigma(A^0),
\end{align}
where $\sigma(A^0) = \sigma(\overline{W} \otimes I_p) \cup \sigma(\overline{W} \otimes I_p)$. This implies that the eigenspectrum of matrix $A^0$ includes two sets of the eigenspectrum of $\overline{W} \otimes I_p$. In case the two consensus matrix used in the dynamics \eqref{eq_xdot_g} and \eqref{eq_ydot_g} are different, for example as $W_1$ and $W_2$ matrices, we have $\sigma(A^0) = \sigma(\overline{W}_1 \otimes I_p) \cup \sigma(\overline{W}_2 \otimes I_p)$. These notions are used in the perturbation-based proof analysis in the next section.

\section{Perturbation-based Proof of Convergence}\label{sec_conv}
In the section, we prove the convergence of the Algorithm~\ref{alg_1} to the optimal point of problem \eqref{eq_prob}-\eqref{eq_fij}. For notation simplicity, we drop $(t,\eta,\gamma)$ unless where it is needed. First, we recall some perturbation-based theory results.

\begin{lem} \label{lem_pert}
\cite{stewart_book,cai2012average} Consider matrix $A(\eta)$ of size $n$ which smoothly depends on variable $\eta \geq 0$. Let $l \in \{1,\dots,n\}$
and $\lambda_1,\dots,\lambda_l$ be semi-simple eigenvalues of matrix $A^0$, with (linearly independent) right and left eigenvectors $\mb{v}_1,\dots,\mb{v}_l$ and $\mb{u}_1,\dots,\mb{u}_l$ such that
$$ [\mb{v}_1,\dots,\mb{v}_l]^\top [\mb{u}_1,\dots,\mb{u}_l] = I_l.
$$
Let $\lambda_i(\eta)$ be the $i$th eigenvalue of $A(\eta)$ corresponding to $\lambda_i$ as the $i$th eigenvalue of $A^0$. Then, $\partial_\eta \lambda_i(\eta)|_{\eta=0}$ is the $i$th eigenvalue $\lambda^\mc{S}_i$ of the following matrix of size $l$,
\begin{align}\left(\begin{array}{ccc}
\mb{u}_1^\top A' \mb{v}_1 & \ldots & \mb{u}_1^\top A' \mb{v}_l \\
& \ddots & \\
\mb{u}_l^\top A' \mb{v}_1 & \ldots & \mb{u}_l^\top A' \mb{v}_l
\end{array} \right), ~ A' = \partial_{\eta} A(\eta)|_{\eta=0}.
\end{align}
Then, the eigenvalue $\lambda_i(\eta)$ under the perturbation $\eta$ satisfies the following equality,
\begin{align} \lambda_i(\eta) = \lambda_i(0) + \eta \lambda^\mc{S}_i + \mc{O}(\eta).
\end{align}
\end{lem}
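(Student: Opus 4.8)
The plan is to invoke classical analytic perturbation theory at a semi-simple eigenvalue and then collapse the first-order behaviour onto an $l\times l$ eigenproblem by testing the differentiated eigen-relation against the left eigenvectors. First I would reduce to the case that matters in the sequel: a single semi-simple eigenvalue $\lambda_0$ of $A^0$ of algebraic (hence geometric) multiplicity $l$, so $\lambda_1=\dots=\lambda_l=\lambda_0$; the general statement follows by applying this case to each distinct value among the $\lambda_i$ on its own eigenspace. Stack the eigenvectors as $V\Let[\mb{v}_1,\dots,\mb{v}_l]$, $U\Let[\mb{u}_1,\dots,\mb{u}_l]$, so that $A^0 V=\lambda_0 V$, $U^\top A^0=\lambda_0 U^\top$, and the normalization in the statement reads $V^\top U=I_l$, equivalently $U^\top V=I_l$. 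Because $\lambda_0$ is semi-simple and $A(\eta)$ depends smoothly on $\eta$, the Riesz projection $P(\eta)=\frac{1}{2\pi\mathrm{i}}\oint_{\Gamma}(zI-A(\eta))^{-1}\,dz$, with $\Gamma$ a small circle enclosing $\lambda_0$ but no other eigenvalue of $A^0$, is well defined and smooth for small $\eta\ge 0$, has constant rank $l$ with $P(0)=VU^\top$; hence $A(\eta)$ has exactly $l$ eigenvalues $\lambda_1(\eta),\dots,\lambda_l(\eta)$ inside $\Gamma$, all tending to $\lambda_0$, and the cited results in addition supply branches that are (one-sidedly) differentiable at $\eta=0$, with eigenvector branches $\mb{v}_i(\eta)$ whose limits $\mb{v}_i(0)=V\mb{c}_i$ lie in $\mathrm{range}\,P(0)=\mathrm{span}\{\mb{v}_1,\dots,\mb{v}_l\}$.

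The heart of the argument is then a single differentiation. Differentiating $A(\eta)\mb{v}_i(\eta)=\lambda_i(\eta)\mb{v}_i(\eta)$ at $\eta=0$ gives $A'\mb{v}_i(0)+A^0\mb{v}_i'(0)=\lambda_i'(0)\mb{v}_i(0)+\lambda_0\mb{v}_i'(0)$ with $A'=\partial_\eta A(\eta)|_{\eta=0}$; left-multiplying by $\mb{u}_j^\top$ and using $\mb{u}_j^\top A^0=\lambda_0\mb{u}_j^\top$ cancels the two terms carrying the unknown $\mb{v}_i'(0)$ and leaves
\begin{align}\label{eq_pert_core}
\mb{u}_j^\top A'\,\mb{v}_i(0)=\lambda_i'(0)\,\mb{u}_j^\top\mb{v}_i(0),\qquad j=1,\dots,l.
\end{align}
Since $\mb{v}_i(0)=V\mb{c}_i$ and $U^\top V=I_l$, one has $\mb{u}_j^\top\mb{v}_i(0)=(\mb{c}_i)_j$, while $\mb{u}_j^\top A'V$ is the $j$-th row of the matrix $\mc{S}\Let U^\top A'V=[\mb{u}_j^\top A'\mb{v}_k]_{j,k}$ displayed in the statement; thus \eqref{eq_pert_core} reads $\mc{S}\mb{c}_i=\lambda_i'(0)\mb{c}_i$, i.e.\ every $\partial_\eta\lambda_i(\eta)|_{\eta=0}$ is an eigenvalue $\lambda_i^{\mc{S}}$ of $\mc{S}$. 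To see that all eigenvalues of $\mc{S}$ (with the correct multiplicities) arise in this way, I would rerun the projection argument in compressed form: choose a smooth biorthogonal pair of frames $V(\eta),U(\eta)$ for $\mathrm{range}\,P(\eta)$ with $V(0)=V$, $U(0)=U$, $U(\eta)^\top V(\eta)=I_l$ and $V(\eta)U(\eta)^\top=P(\eta)$; then the eigenvalues of $A(\eta)$ enclosed by $\Gamma$ coincide with those of $\widehat{A}(\eta)\Let U(\eta)^\top A(\eta)V(\eta)$, and a short computation using $A^0 V=\lambda_0 V$, $U^\top A^0=\lambda_0 U^\top$ and $\partial_\eta(U(\eta)^\top V(\eta))|_{\eta=0}=0$ gives $\widehat{A}(0)=\lambda_0 I_l$ and $\widehat{A}'(0)=\mc{S}$, so $\widehat{A}(\eta)=\lambda_0 I_l+\eta\,\mc{S}+\mc{O}(\eta^2)$. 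Matching the $l$ eigenvalues of $\widehat{A}(\eta)$ with $\lambda_1(\eta),\dots,\lambda_l(\eta)$ and letting $\eta\to 0^+$ yields the stated expansion $\lambda_i(\eta)=\lambda_i(0)+\eta\lambda_i^{\mc{S}}+\mc{O}(\eta)$.

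The real obstacle is not this linear algebra but the structural fact borrowed from \cite{stewart_book,cai2012average}: at the \emph{repeated} semi-simple eigenvalue $\lambda_0$ the perturbed spectrum genuinely organizes into branches that are differentiable at $\eta=0$ --- no fractional-power (Puiseux) behaviour of order $\eta^{1/2}$ --- with limiting eigenvectors inside the unperturbed eigenspace, so that the branches automatically select the eigenbasis of $\mc{S}$ encoded by the coordinate vectors $\mb{c}_i$. Semi-simplicity is exactly the hypothesis that suppresses the Jordan-block pathologies responsible for derivative-destroying expansions; when $\mc{S}$ is itself semi-simple the remainder is in fact $\mc{O}(\eta^2)$, and in general one still obtains the first-order estimate, consistent with the conservative remainder recorded in the statement. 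Everything else --- the normalization bookkeeping, the cancellation in \eqref{eq_pert_core}, and the compressed-operator expansion --- is routine once this input is granted.
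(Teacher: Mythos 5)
The paper does not prove this lemma at all: it is imported verbatim from the cited references (Stewart--Sun and Cai--Chen) and used as a black box in Theorem~\ref{thm_zeroeig}, so there is no in-paper argument to compare yours against. On its own merits, your proposal is the standard and correct proof of this classical fact: reduce to one repeated semi-simple eigenvalue, use the Riesz projection $P(\eta)$ to isolate the $l$ eigenvalues that stay near $\lambda_0$, and compress $A(\eta)$ to $\widehat{A}(\eta)=U(\eta)^\top A(\eta)V(\eta)=\lambda_0 I_l+\eta\,\mc{S}+\mc{O}(\eta^2)$ so that the perturbed group coincides with the spectrum of an $l\times l$ matrix whose first-order term is exactly $\mc{S}=U^\top A' V$. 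Two remarks. First, your ``heart of the argument'' differentiation of $A(\eta)\mb{v}_i(\eta)=\lambda_i(\eta)\mb{v}_i(\eta)$ presupposes differentiable eigenvector branches, which for a repeated eigenvalue is precisely the delicate point; you correctly flag this and defer it, but note that your second, compressed-operator argument does not need it --- continuity of the eigenvalues of $(\widehat{A}(\eta)-\lambda_0 I_l)/\eta=\mc{S}+\mc{O}(\eta)$ already yields $\lambda_i(\eta)=\lambda_0+\eta\lambda_i^{\mc{S}}+o(\eta)$, so that route alone is a complete proof and is the one to keep. Second, the remainder as printed in the lemma, $\mc{O}(\eta)$, is vacuous (it swallows the first-order term); the correct statement is $o(\eta)$ in general and $\mc{O}(\eta^2)$ under additional nondegeneracy of $\mc{S}$, as you observe --- this is a typo in the paper rather than a flaw in your argument. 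Also be aware that the lemma as stated tacitly treats $\lambda_1,\dots,\lambda_l$ as one semi-simple group; for genuinely distinct simple eigenvalues the derivative is the diagonal entry $\mb{u}_i^\top A'\mb{v}_i$ rather than a generic eigenvalue of the full matrix $\mc{S}$, and your reduction to a single repeated eigenvalue is the right reading for how the lemma is applied to the $2p$ zero eigenvalues of $A_h^0$.
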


\begin{lem} \label{lem_dbound} \cite[Theorem~39.1]{bhatia2007perturbation}
Define matrix~${A_h(\eta) = A_h^0 +\eta A_h^1}$ of size $2np$, where $A_h^0$ and $A_h^1$ are $\eta$-independent. Define the optimal matching distance $d(\sigma(A_h),\sigma(A_h^0))$ as the maximum distance between the eigenspectrum of $A_h$ and $A_h^0$. Then,
\begin{align}
d(\sigma(A_h),\sigma(A_h^0))\leq 4(\lVert A_h^0\rVert+\lVert A_h\rVert)^{1-\frac{1}{np}} \lVert \eta A_h^1\rVert^{\frac{1}{np}},
\end{align}
for~${\min_{\pi} \max_{1\leq i\leq 2np} (\lambda_i - \lambda_{\pi(i)}(\eta))}$ with~$\pi(i)$ as the~$i$th permutation over $2np$ symbols.
\end{lem}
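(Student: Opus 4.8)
This is a verbatim invocation of a classical result in matrix perturbation theory (Elsner's spectral‑variation bound, \cite[Theorem~39.1]{bhatia2007perturbation}), so the only thing to do is to recall why it holds; I would not reprove it in the paper, but the argument runs as follows. Write $B := A_h = A_h^0 + \eta A_h^1$, $E := \eta A_h^1$, let $N$ be the matrix dimension, and let $\|\cdot\|$ be the operator $2$‑norm; the target is a bound on the optimal matching distance $\min_{\pi}\max_{i}|\lambda_i(A_h^0) - \lambda_{\pi(i)}(B)|$ in terms of $\|E\|$ and the two operator norms.

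\textbf{One-sided spectral variation.} First I would show that every $\mu \in \sigma(B)$ lies within $(\|A_h^0\|+\|B\|)^{1-1/N}\|E\|^{1/N}$ of $\sigma(A_h^0)$. If $\mu \in \sigma(A_h^0)$ this is trivial; otherwise $M := A_h^0 - \mu I$ is invertible, and a unit eigenvector $x$ of $B$ for $\mu$ gives $Mx = (A_h^0 - B)x = -Ex$, so $\|M^{-1}\| \ge 1/\|Mx\| \ge 1/\|E\|$. Combining this with the elementary inequality $\|M^{-1}\| \le \|M\|^{N-1}/|\det M|$ — a consequence of $|\det M| = \prod_i \sigma_i(M) \le \sigma_{\min}(M)\,\|M\|^{N-1}$ and $\|M^{-1}\| = \sigma_{\min}(M)^{-1}$ — together with $\det M = \pm\prod_i(\lambda_i(A_h^0)-\mu)$ and $\|M\| \le \|A_h^0\| + |\mu| \le \|A_h^0\| + \|B\|$, yields $\prod_i|\lambda_i(A_h^0)-\mu| \le \|E\|\,(\|A_h^0\|+\|B\|)^{N-1}$. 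Bounding the product below by $(\min_i|\lambda_i(A_h^0)-\mu|)^{N}$ and extracting the $N$‑th root gives the one-sided estimate, with constant $1$.

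\textbf{Passage to an optimal matching.} The lemma asserts the bound for a single \emph{bijection} $\pi$ between the two spectra (counted with multiplicity); producing such a $\pi$ from the pointwise one-sided estimate is the substantive part of the proof, and is where the universal constant $4$ appears. The standard route deforms $A_h^0$ into $B$ along the segment $A_h^0 + tE$, $t\in[0,1]$, chooses the eigenvalue branches $\lambda_i(t)$ to vary continuously, applies the one-sided estimate locally in $t$, and uses a counting/covering argument (bounding how many branches can enter a given disk) to assemble a bijection whose cost is at most a fixed multiple of the one-sided radius; the details are in \cite{bhatia2007perturbation}. Specializing to $B = A_h$, $A_h^0$ as in \eqref{eq_M_g}, and $E = \eta A_h^1$ then gives exactly the stated inequality.

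\textbf{Main obstacle.} The one-sided step is a one-line eigenvector computation plus a determinant/singular-value inequality and presents no difficulty; the real work — and the reason the constant is the non-sharp value $4$ rather than $1$ — is converting ``each eigenvalue of $B$ is near \emph{some} eigenvalue of $A_h^0$'' into a globally consistent pairing while tracking multiplicities and the coalescing and splitting of eigenvalue branches along the homotopy. Since this is established once and for all in \cite{bhatia2007perturbation}, for our purposes it is enough to cite the theorem with the dimension and the decomposition $A_h = A_h^0 + \eta A_h^1$ identified as above.
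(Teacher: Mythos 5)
The paper offers no proof of this lemma---it is cited verbatim as \cite[Theorem~39.1]{bhatia2007perturbation}---so your decision to recall the Bhatia--Elsner--Krause argument (one-sided Elsner bound via the determinant/singular-value inequality, then the homotopy/matching step that produces the constant $4$) and ultimately defer to the reference matches the paper exactly, and the sketch itself is accurate. One point worth flagging: carried through for a matrix of size $N=2np$, your derivation yields exponents $1-\tfrac{1}{2np}$ and $\tfrac{1}{2np}$, whereas the lemma as printed uses $1-\tfrac{1}{np}$ and $\tfrac{1}{np}$; the paper's statement appears to contain a typo that your (correct) argument silently corrects.
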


Next, for the proof of convergence to optimal state $\mb{x}=\mb{x}^*$ and $\mb{y}=\mb{0}_{np}$ under \textbf{NP-GT} algorithm, we prove that all the eigenvalues of $A_h$ except $p$ zero eigenvalues are in LHP. This implies the stability of consensus-based regime \eqref{eq_xdot_g}-\eqref{eq_ydot_g}. Recall that this follows from \cite{nonlin}, saying that the stability of the nonlinear dynamics is equivalent to the stability at all its operating points over time. Note that, unlike the convex optimization scenario in \cite{nedic2009distributed,dsvm,kar2012distributed,chen2012diffusion,mateos2014distributed,zhang2023accelerated,alghunaim2019distributed,ojsys,lian2023distributed}, we do not require strong-convexity condition $\nabla^2 f_i(\mb x) \succ 0$ at any node $i$.
\begin{theorem} \label{thm_zeroeig}
The $2np \times 2np$ matrix $A_h$ associated with the dynamics \eqref{eq_xydot1}-\eqref{eq_M_g} has only $p$ zero eigenvalues and the rest of eigenvalues are in the LHP for sufficiently small $\eta$.
\end{theorem}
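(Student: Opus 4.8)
The plan is to regard the family $A_h(\eta)=A_h^0+\eta A^1$ of \eqref{eq_Mg} as analytic in $\eta$, to determine $\sigma(A_h^0)$ together with the Jordan structure of its zero eigenvalue, and then to follow the critical eigenvalues to first order in $\eta$ with Lemma~\ref{lem_pert} while keeping the remaining eigenvalues in the open LHP with Lemma~\ref{lem_dbound}. First I would settle the unperturbed spectrum: since $A_h^0$ is block lower-triangular with both diagonal blocks equal to $\overline{W}_{\gamma,\Xi}\otimes I_p$, $\sigma(A_h^0)$ is $\sigma(\overline{W}_{\gamma,\Xi}\otimes I_p)$ with multiplicities doubled (this is what underlies \eqref{eq_spect_k}). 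As $\overline{W}_{\gamma,\Xi}=\overline{W}_\gamma\Xi(t)$ has the same strongly connected, weight-balanced edge set as $\mc G_\gamma$ and $\Xi(t)\succ 0$, its zero eigenvalue is simple and all remaining eigenvalues lie in the open LHP; hence $A_h^0$ has the eigenvalue $0$ with algebraic multiplicity $2p$ and $2(n-1)p$ eigenvalues in the open LHP, at distance at least some $\delta>0$ from the imaginary axis. The zero eigenvalue of $A_h^0$ is moreover semisimple: $(\mb v;\mb w)\in\ker A_h^0$ exactly when $(\overline{W}_{\gamma,\Xi}\otimes I_p)\mb v=\mb 0$ and $(\overline{W}_{\gamma,\Xi}\otimes I_p)\mb w=\mb 0$ (the coupling term $H(\overline{W}_{\gamma,\Xi}\otimes I_p)\mb v$ then vanishes automatically), so $\dim\ker A_h^0=2p$ equals the algebraic multiplicity, which legitimizes applying Lemma~\ref{lem_pert} to the $2p$-fold zero.

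Next I would invoke Lemma~\ref{lem_dbound} with $A_h^1\leftarrow A^1$: the optimal matching distance between $\sigma(A_h)$ and $\sigma(A_h^0)$ is $\mc O(\eta^{1/(np)})$, a bound uniform over the finite switching set $\Gamma$ and over $\Xi(t)\in[\kappa I_n,\mc K I_n]$ (and over the bounded range of $H$). Hence, for $\eta$ small enough, the $2(n-1)p$ eigenvalues matched to the open-LHP part of $\sigma(A_h^0)$ remain in the open LHP, while the remaining $2p$ eigenvalues form a cluster within $\mc O(\eta^{1/(np)})$ of the origin. To resolve that cluster I would use that, for $\eta\neq 0$, reading $\ker A_h$ off the compact form \eqref{eq_M_g} (eliminate $\mb w$ from the first block row, substitute, and use that the zero eigenvalue of $\overline{W}_{\gamma,\Xi}$ has index one) forces $\mb w=\mb 0$ and $(\overline{W}_{\gamma,\Xi}\otimes I_p)\mb v=\mb 0$, so $\dim\ker A_h=p$; thus at least $p$ of the $2np$ eigenvalues of $A_h$ are exactly $0$.

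The direction in which the rest of the cluster moves comes from Lemma~\ref{lem_pert} with $A'=A^1$. Choosing biorthonormal right and left eigenvectors of $A_h^0$ built from the null vector of $\overline{W}_{\gamma,\Xi}$ placed in one block or the other — the left eigenvectors forming two families that differ by an $-H$ factor — and using the cancellations that $A^1$ annihilates the vectors supported in the first block and that the aggregate-Hessian contribution enters the relevant off-diagonal block twice with opposite signs, the $2p\times 2p$ reduced matrix turns out block upper-triangular with a negative multiple of $\overline H_\Xi:=\sum_{i=1}^n \xi_i(t)^{-1}\nabla^2 f_i(\mb x_i)$ in the top-left block and $\mb 0_p$ in the bottom-right block; for $\Xi=I$ the top-left block is exactly $-\tfrac1n(\mb 1_n\otimes I_p)^\top H(\mb 1_n\otimes I_p)$. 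Since this block is negative definite by Assumption~\ref{ass_cost} (see \eqref{eq_H1}), the first-order increments $\lambda^{\mc S}_i$ consist of $p$ eigenvalues with strictly negative real part together with $p$ zeros; hence $p$ of the cluster eigenvalues satisfy $\lambda_i(\eta)=\eta\lambda^{\mc S}_i+o(\eta)$ with $\mathrm{Re}\,\lambda^{\mc S}_i<0$ and enter the open LHP for small $\eta>0$, while the remaining $p$ coincide with the exact zeros from the kernel count — with geometric multiplicity $p$, hence semisimple. Collecting the $2(n-1)p$ eigenvalues from Lemma~\ref{lem_dbound} and the $p$ from Lemma~\ref{lem_pert} in the open LHP, together with exactly $p$ zeros, proves the statement; as $\Gamma$ is finite and the ranges of $\Xi(t)$ and $H$ are compact, the threshold on $\eta$ can be taken uniform in $t$ and $\gamma$.

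The main obstacle is precisely this last perturbation step — building the reduced matrix correctly and certifying it has exactly $p$ eigenvalues with negative real part and $p$ zeros. Here the \emph{global} (rather than per-node) positivity of the aggregate Hessian in \eqref{eq_H1} is indispensable, and one must check that the sector-bound weighting does not spoil the inertia, i.e., that $\overline H_\Xi=\sum_i\xi_i(t)^{-1}\nabla^2 f_i(\mb x_i)\succ 0$ even though individual $\nabla^2 f_i$ may be indefinite; the natural sufficient condition is that the sector $[\kappa,\mc K]$ be tight enough relative to the positivity margin in \eqref{eq_H1} (for the linear case $\Xi=I$ it is just \eqref{eq_H1}). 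A secondary, more routine, point is to reconcile the $\mc O(\eta)$ resolution of Lemma~\ref{lem_pert} on the critical cluster with the coarser $\mc O(\eta^{1/(np)})$ localization from Lemma~\ref{lem_dbound}, and to make the $\eta$-threshold uniform over the switching signal so that the conclusion holds for the time-varying dynamics.
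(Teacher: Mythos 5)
Your proposal follows essentially the same route as the paper: the paper also writes $A_h=A_h^0+\eta A^1$, applies Lemma~\ref{lem_pert} to the $2p$-fold zero eigenvalue of $A_h^0$ to show that $p$ of the perturbed eigenvalues acquire strictly negative real part (via the negative definiteness of $-(\mb{1}_n\otimes I_p)^\top H(\mb{1}_n\otimes I_p)$ from Assumption~\ref{ass_cost}) while $p$ remain at zero, and applies Lemma~\ref{lem_dbound} to keep the remaining $2(n-1)p$ eigenvalues in the open LHP by restricting $\eta\leq\overline{\eta}$. You are, however, more careful than the paper on two points worth keeping. First, you verify semisimplicity of the zero eigenvalue of $A_h^0$ (a hypothesis of Lemma~\ref{lem_pert}) and count $\ker A_h$ directly to certify that exactly $p$ zeros persist; the paper only observes $\partial_\eta\lambda_{1,j}|_{\eta=0}=0$, which by itself does not rule out higher-order drift of those eigenvalues. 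Second, you notice that the right null vector of $\overline{W}_{\gamma,\Xi}=\overline{W}_\gamma\Xi(t)$ is $\Xi(t)^{-1}\mb{1}_n$ rather than $\mb{1}_n$, so the reduced matrix in Lemma~\ref{lem_pert} actually involves $\sum_{i}\xi_i(t)^{-1}\nabla^2 f_i(\mb{x}_i)$. The paper dismisses this with ``we skip the normalization terms \dots\ we only need to know the sign,'' but that is not innocuous here: when the individual $\nabla^2 f_i$ are indefinite, a nonuniform positive reweighting by $\xi_i^{-1}$ can destroy the positive definiteness guaranteed by \eqref{eq_H1}, so the additional condition you flag --- that the sector $[\kappa,\mc{K}]$ be tight enough relative to the positivity margin in \eqref{eq_H1} --- is genuinely needed in the nonlinear case. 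That is a sharpening of the paper's argument, not a defect of yours.
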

\begin{proof}
Recall from the \textbf{NP-GT} dynamics~\eqref{eq_xdot_g}-\eqref{eq_ydot_g} that $A_h = A^0_h + \eta A^1$, and the eigen-spectrum satisfies Eq. \eqref{eq_spect_k}. Recall from Section~\ref{sec_prob} that matrix~$A^0$ has~$2p$ zero eigenvalues. For $j \in \{1,\dots,p\}$ associated with the $p$ dimensions of $\mb{x}_i$, other eigenvalues satisfy
$$\operatorname{Re}\{\lambda_{2n,j}\} \leq \ldots \leq \operatorname{Re}\{\lambda_{3,j}\} < \lambda_{2,j} = \lambda_{1,j} = 0.$$
To check how the eigenvalues (including the zero ones) will change by adding $\eta A_1$ as a perturbation, we use Lemma~\ref{lem_pert}. We specifically need to ensure that
no eigenvalue moves to the RHP, causing instability. Let~$\lambda_{1,j}(\eta,t)$ and~$\lambda_{2,j}(\eta,t)$ denote the perturbed zero eigenvalues. From Lemma~\ref{lem_pert}, we find the right and left eigenvectors associated with the zero eigenvalues as
\begin{align} \nonumber
\mb{v} = [\mb{v}_1~\mb{v}_2] =\frac{1}{\sqrt{n}} \left(\begin{array}{cc}
\mb{1}_n& \mb{0}_n \\
\mb{0}_n & \mb{1}_n
\end{array} \right)\otimes I_p,
\end{align}
and $\mb{u}=\mb{v}^\top$. This follows from the properties of the Laplacian matrix as explained in Section~\ref{sec_prob}. To simplify the derivation, we skip the normalization terms (due to the nonlinearity).
This does not change the outcome since we only need to know the \textit{sign} of multiplications not the exact value.
From \eqref{eq_beta_M}, we then have
\begin{align} \nonumber
\mb{v}_h = [\mb{v}_{h1}~\mb{v}_{h2}] = \left(\begin{array}{cc}
\mb{1}_n& \mb{0}_n \\
\mb{0}_n & \mb{1}_n
\end{array} \right)\otimes I_p.
\end{align}
In order to use Lemma~\ref{lem_pert},
one can find $\partial_{\eta} A_h(\eta)|_{\eta=0}=A_1$. Then,
\begin{eqnarray} \label{eq_dmalpha}
\mb{v}_h^\top A_1 \mb{v}_h= \left(\begin{array}{cc}
\mb{0}_{p\times p} & \mb{0}_{p\times p} \\
\dots & -(\mb{1}_n \otimes I_p)^\top H (\mb{1}_n \otimes I_p)
\end{array} \right).
\end{eqnarray}
Recall from Eq.~\eqref{eq_H1} that $
-(\mb{1}_n \otimes I_p)^\top H ( \mb{1}_n \otimes I_p) \prec 0
$. Therefore,
the matrix \eqref{eq_dmalpha} has $p$ zero and~$p$ negative eigenvalues associated with the zero eigenvalues of $A_h^0$, since~${\partial_{\eta} \lambda_{1,j}|_{\eta=0} = 0}$, ${\partial_{\eta} \lambda_{2,j}|_{\eta=0}<0}$. This implies that the perturbation~$\eta A_1$ puts the $p$ zero eigenvalues of~$A^0_h$ at LHP and other $p$ eigenvalues $\lambda_{1,j}(\eta,t)$ are still set to zero. Next, we need to guarantee that the rest of LHP eigenvalues of $A_h^0$ do not move to RHP due to perturbation. This is by recalling the matching distance in Lemma~\ref{lem_dbound} and is guaranteed by having
\begin{align} \label{eq_proof_d}
d(\sigma(A_h),\sigma(A^0_h)) < \kappa |\operatorname{Re}\{\lambda_{3,j}(\eta,t)\} |.
\end{align}
In other words, this ensures that the eigenvalues remain in LHP (i.e., ${\operatorname{Re}\{\lambda_{3,j}(\eta,t)\},\ldots,\operatorname{Re}\{\lambda_{2n,j}(\eta,t)\}}<0$ for $j \in \{1,\dots,p\}$).
Fig.~\ref{fig_dsigma} better illustrates the idea behind using Lemma~\ref{lem_dbound}.
\begin{figure}
\centering
\includegraphics[width=2.75in]{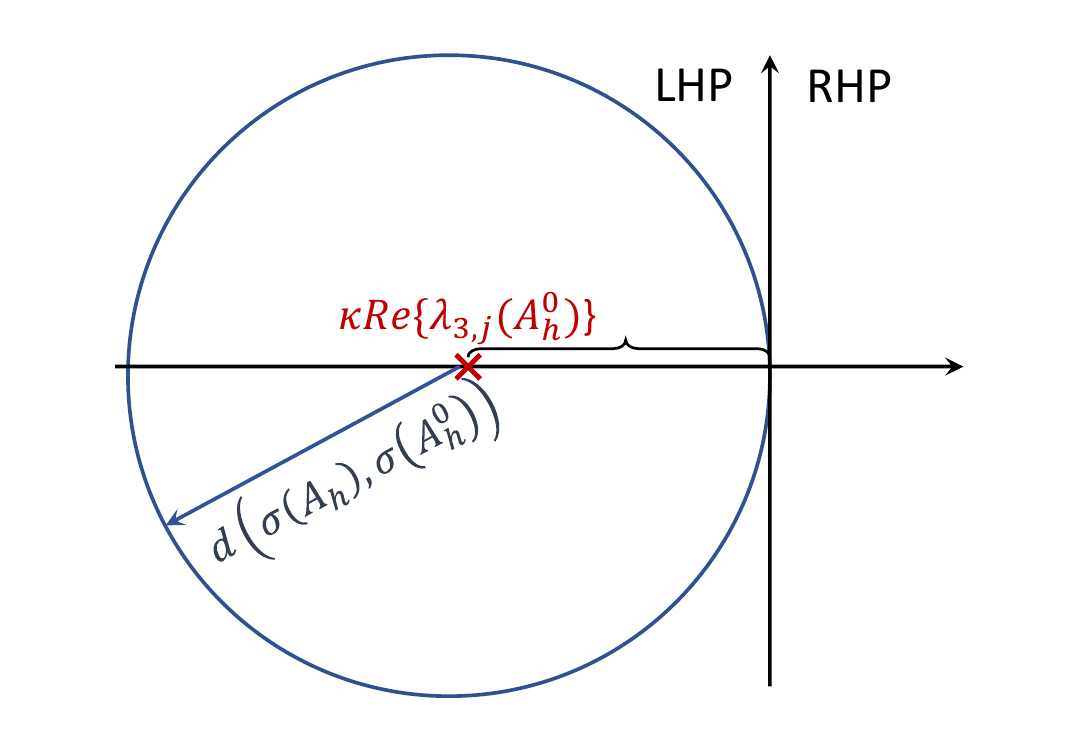} 
\caption{This figure shows the perturbation-based bound on the optimal matching distance between the eigenspectrum of $A_h$ and $A_h^0$. By having $d(\sigma(A_h),\sigma(A^0_h)) < \kappa |\operatorname{Re}\{\lambda_{3,j}(\eta,t)\}|$ the perturbation theory guarantees that the nonzero eigenvalues of $A_h$ remain in the LHP. } \label{fig_dsigma}
\end{figure}
Next, define ${\varphi := \max_{1 \leq i \leq np} \sum_{j=1}^{np} |H_{ij}|}$. Then, from \eqref{eq_sum_wij} and the definition of $\lVert\cdot\rVert_\infty$, $\lVert A_h^0 \rVert_\infty\leq 2\mc{K}(1+\varphi)$, $\lVert A^1 \rVert_\infty \leq \max\{ \varphi, 1\}$,
and thus, $\lVert A_h \rVert_\infty \leq \max\{2\mc{K}+\varphi(2\mc{K}+\eta), 2\mc{K}+\eta\}$.
Then, two cases can happen. For the first case, when $\varphi<1$ we define

\small
\begin{align} \nonumber
\overline{\eta} := \argmin_{\eta>0} \big( (2\mc{K}+2\varphi +\max&\{2\mc{K}+\varphi(2\mc{K}
+\eta), 2\mc{K}+\eta\})^{1-\frac{1}{np}} \\
&\eta^{\frac{1}{np}}-\kappa |\operatorname{Re}\{\lambda_{3,j}(\eta,t)\} |\big). \label{eq_alphabar1}
\end{align} \normalsize
Otherwise, when~$\varphi \geq1$ we define

\small 
\begin{equation} \label{eq_alphabar2}
\overline{\eta} = \argmin_{\eta>0} |4(4\mc{K}+\varphi (4\mc{K}+\eta))^{1-\frac{1}{np}} (\eta \varphi)^{\frac{1}{nm}}-\kappa |\operatorname{Re}\{\lambda_{3,j}(\eta,t)\} |.
\end{equation} \normalsize
Therefore, for any $\eta \leq \overline{\eta}$, Eq.~\eqref{eq_proof_d} holds. This ensures that all the system eigenvalues remain in the LHP.
\end{proof} 
Note that, in the proof analysis, we did not assume $\nabla^2 f_i(\mb x) \succ 0$, i.e., we made no assumption on the convexity of the local objectives. It should be mentioned that, in general, the eigenvalues may still remain in the LHP for a possibly less conservative choice of~$\eta>\overline{\eta}$. Further,
for sufficiently small~$\eta$, system dynamics~$A_h$ has~$p$ zero eigenvalues with eigenvector~$\mb{v}_{h1}$. Thus, the \textit{null space} of the \textit{time-varying}~$A_h$, defined by $ \text{span}\{[\mb{1}_n ;\mb{0}_n ]\otimes I_p\}$, is \textit{time-independent}. Using these facts along with the Lyapunov theorem, we prove the convergence to the optimizer in the next theorem.
\begin{theorem} \label{thm_lyapunov}
For sufficiently small $\eta$ satisfying Theorem~\ref{thm_zeroeig}, the proposed Algorithm~\ref{alg_1} converges to~$[\mb{x}^*;\mb{0}_{np}]$.
\end{theorem}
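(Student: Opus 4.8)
The plan is to combine the spectral information from Theorem~\ref{thm_zeroeig} with a Lyapunov/LaSalle-type argument, organized into two conceptually separate steps: (i) show that every trajectory of \eqref{eq_xdot_g}--\eqref{eq_ydot_g} converges to the equilibrium manifold $\mc{E}:=\{[\mb{1}_n\otimes\overline{\mb{a}};\mb{0}_{np}]:\overline{\mb{a}}\in\mathbb{R}^p\}$ --- which one checks directly is exactly the set of equilibria of \eqref{eq_xdot_g}--\eqref{eq_ydot_g} (at a consensus point the $h_l$-differences vanish, $\mb{y}=\mb{0}_{np}$ kills the coupling term, and $\partial_t\nabla f_i$ vanishes because $\dot{\mb{x}}_i=\mb{0}_p$) and which by Theorem~\ref{thm_zeroeig} is the $p$-dimensional center subspace of the quasi-linearized dynamics $A_h$ --- and (ii) identify the particular point of $\mc{E}$ to which a given trajectory converges and show it equals $[\mb{x}^*;\mb{0}_{np}]$.

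For step (i) I would work in a tubular neighbourhood of $\mc{E}$ and split $[\mb{x}(t);\mb{y}(t)]$ into a component along $\mc{E}$, carried by the consensus average $\overline{\mb{x}}(t):=\tfrac1n(\mb{1}_n^\top\otimes I_p)\mb{x}(t)$ together with $\mb{y}=\mb{0}_{np}$, and a transverse component $\zeta_\perp(t)$. The transverse dynamics is governed by the nonzero part of the spectrum of $A_h(t,\eta,\gamma)$, which by Theorem~\ref{thm_zeroeig} stays uniformly in the open LHP as $t$ ranges over $\mathbb{R}_{\geq0}$, $\gamma$ over the finite set, $\Xi(t)$ over $[\kappa,\mc{K}]I_n$, and $H$ over a bounded set (the last by the Lipschitz part of Assumption~\ref{ass_cost}). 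Since this is a compact family of Hurwitz matrices, I would either invoke the operating-point stability result of \cite{nonlin} or build a common quadratic Lyapunov function $V_\perp(\zeta_\perp)=\zeta_\perp^\top P\,\zeta_\perp$ with $P\succ0$ valid uniformly over the family, obtaining $\zeta_\perp(t)\to\mb{0}$ exponentially; hence $\mb{y}(t)\to\mb{0}_{np}$, $\mb{x}(t)-\mb{1}_n\otimes\overline{\mb{x}}(t)\to\mb{0}$, and --- since the drift of $\overline{\mb{x}}$ is controlled by the integrable $\zeta_\perp$ --- $\overline{\mb{x}}(t)$ converges to some $\overline{\mb{x}}^\infty$.

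For step (ii) I would use the gradient-tracking identity \eqref{eq_sumxdot2}, $\tfrac{d}{dt}(\mb{1}_n^\top\otimes I_p)\mb{x}=-\eta(\mb{1}_n^\top\otimes I_p)\mb{y}=-\eta\sum_i\nabla f_i(\mb{x}_i)$. Letting $t\to\infty$, the left-hand side tends to $\mb{0}_p$ while $\mb{x}_i(t)\to\overline{\mb{x}}^\infty$, so by continuity of the gradients $\sum_i\nabla f_i(\overline{\mb{x}}^\infty)=n\,\nabla F(\overline{\mb{x}}^\infty)=\mb{0}_p$: the limit is a stationary point of $F$. By Assumption~\ref{ass_cost}, \eqref{eq_H1} restricted to the consensus subspace forces $\nabla^2 F\succ0$ there, so $F$ has a unique stationary point, its global minimizer; hence $\overline{\mb{x}}^\infty=\overline{\mb{x}}^*$ and $[\mb{x}(t);\mb{y}(t)]\to[\mb{1}_n\otimes\overline{\mb{x}}^*;\mb{0}_{np}]=[\mb{x}^*;\mb{0}_{np}]$, with attained value $F^*$. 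To make the step ``left-hand side $\to\mb{0}\Rightarrow$ stationarity'' airtight I would instead close with LaSalle's invariance principle applied to the composite Lyapunov function $W=V_\perp(\zeta_\perp)+c\,(F(\overline{\mb{x}})-F^*)$, whose derivative along trajectories is non-positive up to an exponentially small cross term.

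I expect the main obstacle to be step (i): pointwise (frozen-time) Hurwitz-ness of $A_h(t,\eta,\gamma)$ does not by itself give asymptotic stability of the genuinely time-varying, switching system, so the crux is \emph{uniformity} --- exhibiting a common Lyapunov function for the whole compact transverse family (possible because that family is compact, but this must be argued, not assumed) or, equivalently, a careful use of \cite{nonlin}. A companion subtlety is that the transverse chart moves with the operating point, since the center subspace of $A_h$ aligns with the tangent space of $\mc{E}$ only on $\mc{E}$ and tilts with $\Xi$ away from it, so the analysis must be localized near $\mc{E}$ with the resulting higher-order terms controlled. Finally, because the equilibrium set is the $p$-dimensional manifold $\mc{E}$ rather than an isolated point, stability of $A_h$ alone only yields convergence to $\mc{E}$; pinning down the specific optimizer $\mb{x}^*$ genuinely needs the gradient-tracking conservation law together with the strict convexity from \eqref{eq_H1}, and one must also verify that the smallness threshold $\overline{\eta}$ of Theorem~\ref{thm_zeroeig} can be chosen uniformly in the switching signal $\gamma$.
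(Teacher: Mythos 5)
Your proposal is correct in outline but takes a genuinely different route from the paper. The paper's proof is a one-line quadratic Lyapunov argument on the full error: it sets $\delta=[\mb{x};\mb{y}]-[\mb{x}^*;\mb{0}_{np}]$, observes $\dot{\delta}=A_h\delta$ because the optimizer is an equilibrium, bounds $\dot{V}=\delta^\top A_h\delta\leq\max_{j}\operatorname{Re}\{\lambda_{2,j}(\eta)\}\,\delta^\top\delta$ by citing consensus-theoretic results of \cite{SensNets:Olfati04}, and concludes by Lyapunov's theorem. Your two-step argument (exponential collapse onto the equilibrium manifold $\mc{E}$ via a common Lyapunov function for the transverse Hurwitz family, then identification of the limit through the gradient-tracking identity \eqref{eq_sumxdot2} and the strict convexity of $F$ on the consensus subspace implied by \eqref{eq_H1}) is longer but makes explicit two points the paper's bound glosses over: for a non-normal $A_h$ the quadratic form $\delta^\top A_h\delta$ is governed by the symmetric part of $A_h$ rather than directly by $\operatorname{Re}\{\lambda_{2,j}\}$, and $A_h$ has $p$ zero eigenvalues, so $\dot V$ vanishes along $\text{span}\{[\mb{1}_n;\mb{0}_n]\otimes I_p\}$ and cannot be negative definite on all of $\mathbb{R}^{2np}$; the paper implicitly assumes $\delta$ carries no component in that direction, which is precisely what your step (ii) supplies via the conservation law and Assumption~\ref{ass_cost}. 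What your route buys is an honest treatment of the switching, time-varying, non-normal dynamics and of the nontrivial null space; what it costs is the uniformity and moving-chart work you correctly flag as the remaining burden. Two small cautions: the claim that $\mc{E}$ is \emph{exactly} the equilibrium set is only safe on the invariant set carved out by $\mb{y}(0)=\mb{0}_{np}$ and \eqref{eq_sumydot}, since off it the Laplacian term can balance $-\eta\mb{y}_i$; and the uniform-in-$\gamma$ choice of $\overline{\eta}$ you ask for is available because $\Gamma$ is finite, so one may take the minimum over the switching set.
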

\begin{proof}
Define the PSD Lyapunov function ${V(\delta) = \frac{1}{2} \delta^\top \delta = \frac{1}{2}\lVert \delta \rVert_2^2}$ with the variable $\delta$ defined as
$$\delta = \left(\begin{array}{c} {\mb{x}} \\ {\mb{y}} \end{array} \right) - \left(\begin{array}{c} {\mb{x}}^* \\ \mb{0}_{np} \end{array} \right) \in \mathbb{R}^{2np}.$$
This variable denotes the difference between the system state and the optimal state and $\delta \rightarrow \mb{0}_{2np}$ implies that the state variables converge to the optimal state. First, recall that $[\mb{x}^*;\mb{0}_{np}]$ belongs to the null-space and is the invariant state of the \textbf{NP-GT} dynamics. We have,
\begin{align} \nonumber
\dot{\delta} &= \left(\begin{array}{c} \dot{\mb{x}} \\ \dot{\mb{y}} \end{array} \right) - \left(\begin{array}{c} \dot{\mb{x}}^* \\ \mb{0}_{np} \end{array} \right) \\&= A_h \left(\begin{array}{c} \mb{x} \\ \mb{y} \end{array} \right) - A_h\left(\begin{array}{c} \mb{x}^* \\ \mb{0}_{np} \end{array} \right) = A_h \delta.
\end{align}
Then, $\dot{V} = {\delta}^\top \dot{\delta}= \delta^\top A_h {\delta}$. Recall that, due to Theorem~\ref{thm_zeroeig}, the largest nonzero eigenvalue of $A_h$, denoted by $\lambda_{2,j}(\eta)$, is in the LHP. Then, using the results from~\cite[Sections~VIII-IX]{SensNets:Olfati04} on system theory, we have
\begin{eqnarray} \label{eq_Re2}
\dot{V} = \delta^\top A_h \delta \leq \max_{1\leq j\leq p}\operatorname{Re}\{{\lambda}_{2,j}(\eta)\} \delta^\top \delta,
\end{eqnarray}
which is negative-definite for $\delta \neq \mb{0}_{2np}$. Based on the Lyapunov theorem, we can complete the proof.
\end{proof}
The above proof further shows that the convergence rate depends on the largest non-zero eigenvalue of $A_h$. Recalling Eq.~\eqref{eq_M_g}, the eigenspectra of $A_h$ depends on the eigenspectrum of the Laplacian matrix $\overline{W}_\gamma$. It is known that the largest nonzero eigenvalue of $\overline{W}_\gamma$ (referred to as the \textit{algebraic connectivity}) directly determines the convergence rate of the consensus-based dynamics \cite{SensNets:Olfati04}. This implies that for networks with higher connectivity, the convergence rate is faster. This is better illustrated in the next section by simulations.

\begin{rem}
The perturbation-based analysis in this section allows to address change in the eigenspectrum of the proposed dynamics while satisfying system stability. In other words, the proposed solution converges over any time-varying network topology as far as its perturbed eigenvalues remain in the LHP, satisfying system stability. This is a key feature of the proposed \textbf{NP-GT} algorithm that allows to address time-varying network topologies and convergence in the presence of link failure. This makes the \textbf{NP-GT} algorithm superior to other existing GT-based solutions in the literature.
\end{rem}

\begin{rem}
Recall that, although the local cost functions $f_i(\cdot)$ are non-convex and may have local minimum, the gradient-based dynamics moves from these local minimums and converges to the global minimum of the global cost function $F(\cdot)$. This follows from the structure of the proposed coupled dynamics, in which both the gradient-tracking in Eq.~\eqref{eq_ydot_g} and state-update in Eq.~\eqref{eq_xdot_g} are based on averaging over the neighboring nodes. Adopting this cooperative strategy, the consensus-coupling structure of the problem~\eqref{eq_prob} under non-convex Assumption~\ref{ass_cost} is addressable. An example of such non-convex setup is given later in Section~\ref{sec_sim_nonconv}.
\end{rem}

\section{Simulations}\label{sec_sim}
\subsection{Convex Linear Regression Problem}
For the first simulation, we consider the linear regression problem (with quadratic objective function) to be compared with some existing literature. In this setup, each machine/node optimizes the following objective function,
\begin{equation} \label{eq_fi}
f_i(\boldsymbol{\beta}_i,\nu_i) = \frac{1}{m_i} \sum_{j=1}^{m_i} (\boldsymbol{\beta}_i^\top \boldsymbol{\chi}_j^i - \nu_i -y_j)^2,
\end{equation}
with $\boldsymbol{\beta}_i \in \mathbb{R}^2$ and $\nu_i \in \mathbb{R}$ denoting the regressor line parameters fitting the local data. Each machine $i$ is assigned a batch of $m_i$ data points with coordinates $\boldsymbol{\chi}_j^i$ and $y_i$ in 2D space. The goal is to find the linear regressor optimally fitting the data points locally and distributedly. In the distributed formulation \eqref{eq_prob}-\eqref{eq_fij}, the state variable is $\mb{x}_i = [\boldsymbol{\beta}_i ; \nu_i]$ and all the machines/nodes need to reach consensus on this regression parameters. We consider $n=10$ machines with access to $m_i=50$ randomly chosen data points (out of $m=100$ points). Each machine optimizes its regression parameters for its batch of data and shares gradient information over a random Erdos-Renyi (ER) network with linking probability $25\%$. The nonlinearity of the links is considered log-scale quantization. The regressor parameters at different nodes with respect to time are shown in Fig.~\ref{fig_beta}. For this simulattion we set $\rho = \frac{1}{256}$ and $\eta = 2$.
\begin{figure}
\centering
\includegraphics[width=1.7in]{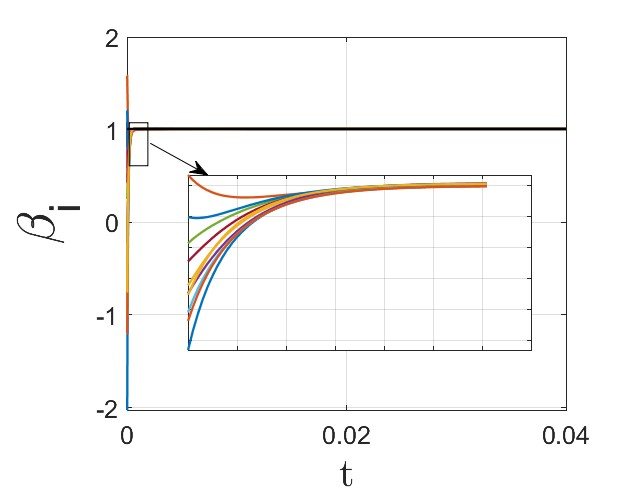}
\includegraphics[width=1.7in]{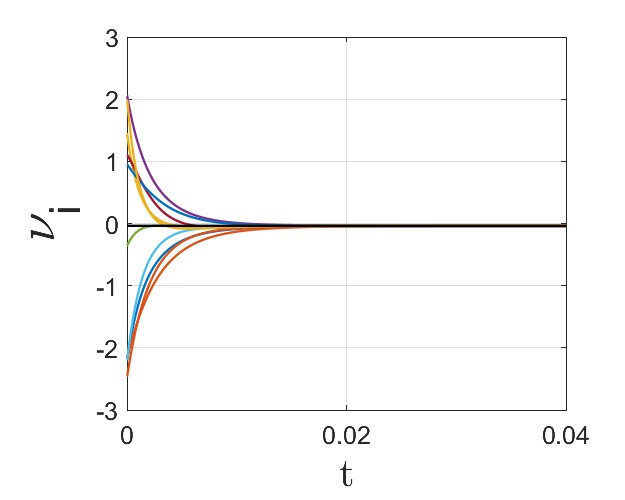} 
\caption{This figure shows the convergence of the local regressor parameters to the optimal centralized linear regressor. As illustrated, the agents reach consensus on the regression parameters $\beta_i,\nu_i$. } \label{fig_beta}
\end{figure}

We compare our solution subject to log-quantized data exchange with linear \cite{xin2021fast}, fixed-time \cite{garg2020fixed1}, and sign-based \cite{taes} continuous-time protocols given in the literature. All these works are based on gradient-tracking (GT) without considering link nonlinearity. Since these works are not based on perturbation analysis, they do not necessarily converge over time-varying networks, and thus, the network in this case is considered time-invariant. The residual (optimality gap) is compared in Fig.
\ref{fig_compare}. In this simulation, we consider heterogeneous data among different nodes, i.e., every node has access only to its batch of partial data and \textit{not} all the data points. The simulation shows that the proposed dynamics are robust to this data heterogeneity and converge in the face of this partial data localization.
\begin{figure}
\centering
\includegraphics[width=2.75in]{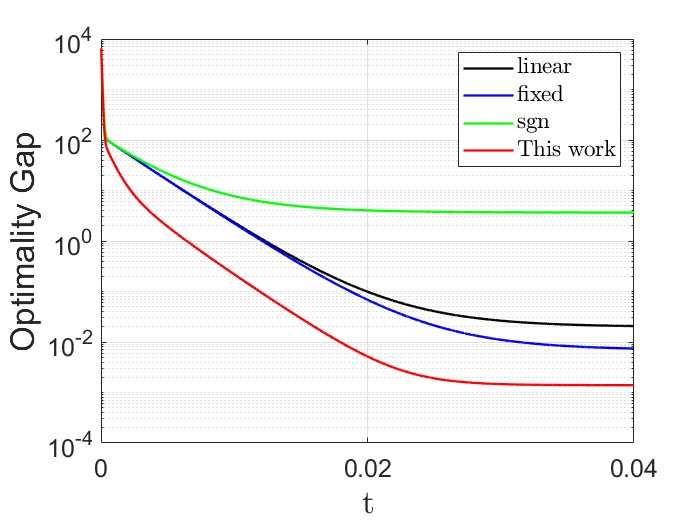} 
\caption{This figure compares the optimality gap of the objective function~\eqref{eq_fi} (linear regression) under different distributed optimization techniques. Assuming heterogeneous setups, each agent can access its \textit{local batch of data}. } \label{fig_compare}
\end{figure}

\subsection{Non-Convex Problem} \label{sec_sim_nonconv}
\begin{figure*}
\centering
\includegraphics[width=1.75in]{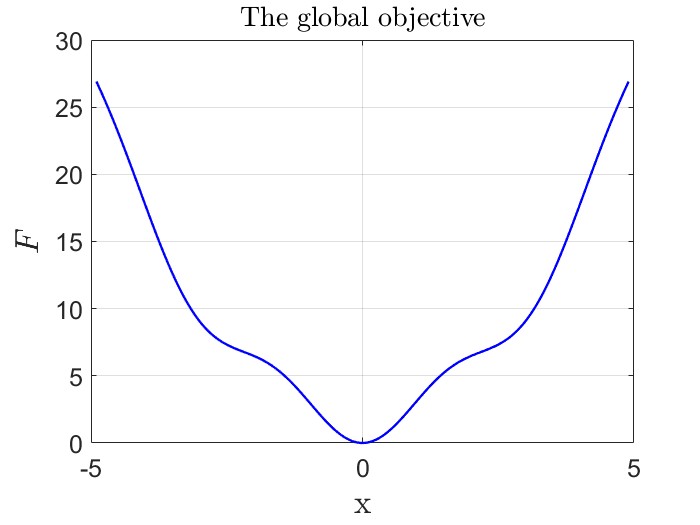}
\includegraphics[width=1.75in]{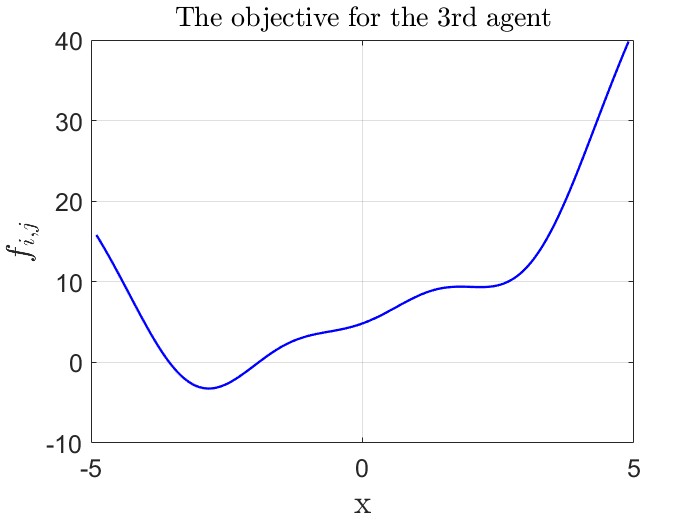}
\includegraphics[width=1.75in]{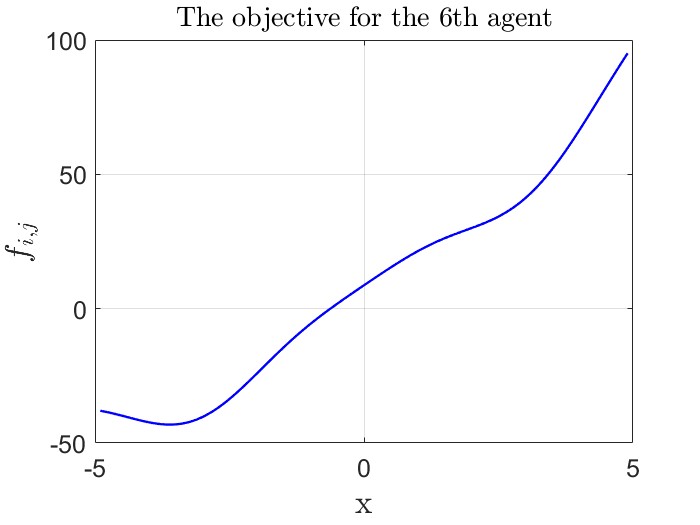}
\includegraphics[width=1.75in]{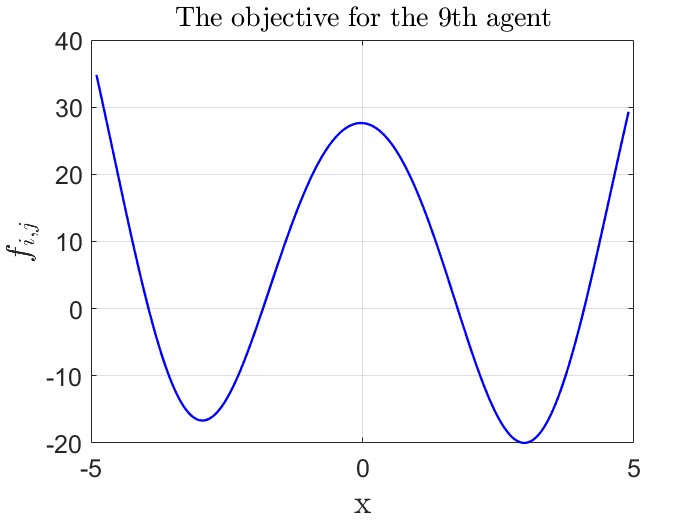} 
\caption{The Left figure shows the global cost function $F(\mb x) = \frac{1}{n}\sum_{i=1}^{n} \frac{1}{m}\sum_{j=1}^{m} f_{i,j}(\mb{x}_i)$. The other three figures show the local non-convex objective functions at three sample nodes. } \label{fig_nonconv}
\end{figure*}
Next, we consider a synthetic non-convex local objective function to be optimized. The simulation is performed over a network of $n=10$ agents/nodes and $m=40$ sample data points.
The objective is defined similarly to the one from \cite{xin2021fast}:
\begin{align}\label{eq_fij_sim}
f_{i,j}(x_i) = 2 x_i^2 +3\sin^2(x_i)+a_{i,j} \cos(x_i) + b_{i,j}x_i,
\end{align}
with $\sum_{i=1}^n \sum_{j=1}^m a_{i,j} = 0$ and $\sum_{i=1}^n \sum_{j=1}^m b_{i,j}=0$ such that $a_{i,j},b_{i,j} \neq 0$ and set randomly in the range $(-5,5)$. Note that these local cost functions are not convex, i.e., $\nabla^2 f_{i}(\mb{x})$ might be negative at some points. This is illustrated in Fig.~\ref{fig_nonconv}.
The agents' communication network is considered an ER random graph with linking probability $20\%$. This graph topology randomly \textit{changes} at every $1$ sec. We set the GT tracking parameter as $\eta = 1$.
Considering log-scale quantization nonlinearity at the links, we repeat the simulations for three different values of quantization level: $\rho=\frac{1}{128}$, $\rho=\frac{1}{512}$, $\rho=\frac{1}{1024}$. The simulation is shown in Fig.~\ref{fig_nonconv2}.
\begin{figure}
\centering
\includegraphics[width=2.75in]{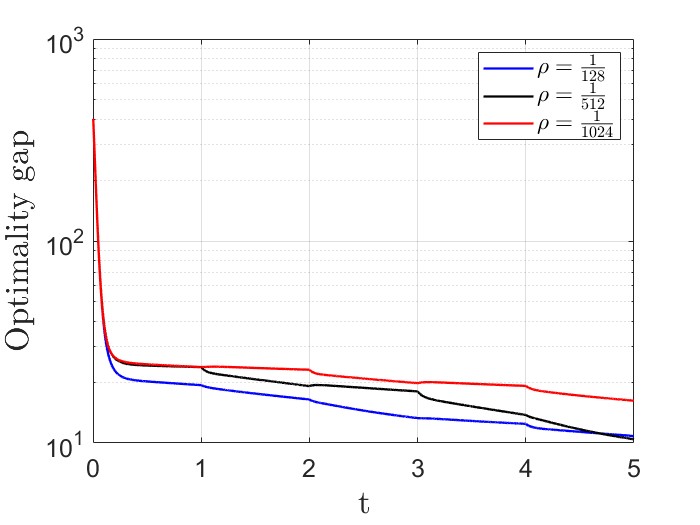} 
\caption{This figure shows the time-evolution of the locally non-convex cost-function~\eqref{eq_fij_sim} subject to log-scale quantized links and time-varying network topology. The change in the network topology causes the non-smoothness of the residual (optimality gap) evolution. } \label{fig_nonconv2}
\end{figure}
The non-smooth time evolution of the optimality gap is because of the change in the network topology. We repeat the simulation for an \textit{n-hop exponential} network topology\footnote{In an exponential network of size $n$, every node is connected to its $2^0,2^1,\dots,2^{\lfloor \log_2(n-1) \rfloor}$-hop neighbors (with $\lfloor \cdot \rfloor$ denoting the greatest integer less than or equal to the number), see more details in \cite{assran2019stochastic}.}. It is known that this type of network topology shows linear convergence behavior with a low optimality gap for distributed learning and optimization \cite{assran2019stochastic,nedic2018network}. This is better illustrated in Fig.~\ref{fig_nonconv2exp}. For this simulation, we set the linking probability of the ER network as $50\%$, $\eta=2$, and $\rho = \frac{1}{64}$.
\begin{figure}
\centering
\includegraphics[width=2.75in]{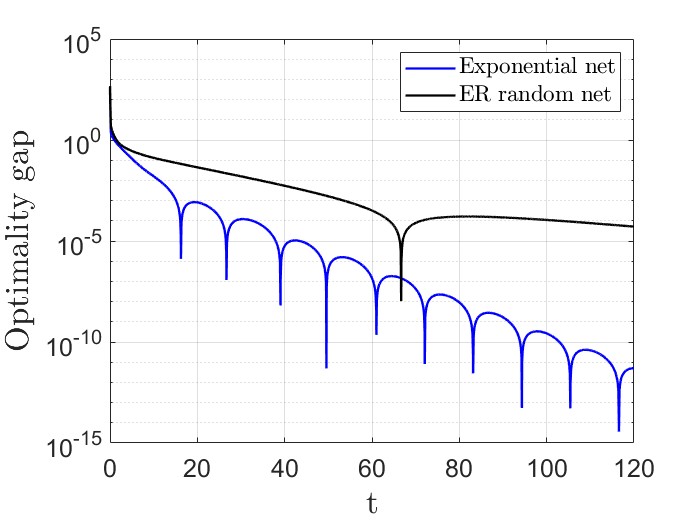} 
\caption{This figure compares the optimality gap for exponential network versus random ER network. As it is clear from the figure, for the same objective~\eqref{eq_fij_sim} and log-scale quantization level, the exponential network reaches a lower residual and optimality gap. } \label{fig_nonconv2exp}
\end{figure}

\subsection{Link Failure Scenario}
We redo the simulation for the same objective and parameters.
Consider the case where some links over the network fail. That may represent packet drops, for example. Since the proposed \textbf{NP-GT} dynamics is defined over balanced networks and works over possibly switching and time-varying topology, it can handle link failure scenarios. Many existing works in the literature cannot address such scenarios; for example, the works \cite{xin2022fast,xin2021fast,qureshi2023distributed} can only handle static networks. To illustrate this, we consider an ER network with $30\%$ connectivity for this simulation and gradually remove some links while the network preserves its strong connectivity. We perform the simulations under different link-removal rates $p$ as shown in Fig.~\ref{fig_nonconv3}.
\begin{figure}
\centering
\includegraphics[width=2.75in]{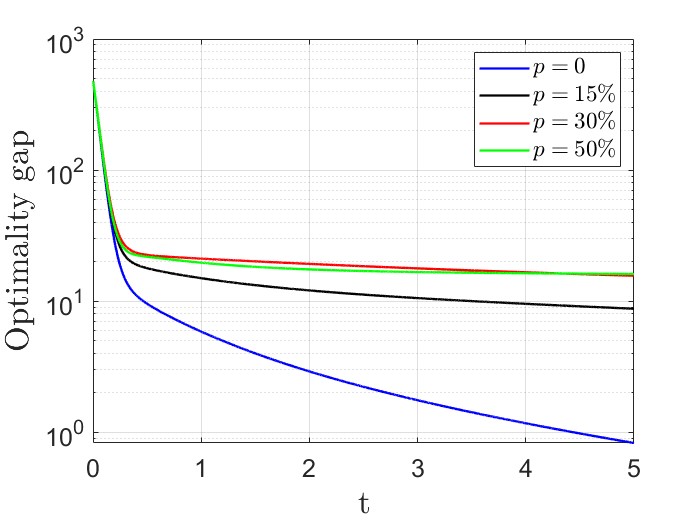} 
\caption{This figure shows the time-evolution of the locally non-convex cost-function~\eqref{eq_fij_sim} over log-scale quantized links and time-varying network topology under link failure rate $p$. Lower network connectivity due to the link failure results in a slower convergence rate.} \label{fig_nonconv3}
\end{figure}
As is evident from the figure, by removing more links, the convergence rate decreases. This is because the convergence rate depends on the actual value of the largest nonzero eigenvalue of $A_h$, which in turn is a function of the algebraic connectivity of the Laplacian matrix $\overline{W}_\gamma$ and according to \cite{SensNets:Olfati04} directly depends on the network connectivity.
\section{Conclusions}\label{sec_con}
\subsection{Concluding Remarks}
This work proposes decentralized algorithms to solve locally non-convex optimization functions. Our solution addresses optimization dynamics subject to (possible) nonlinear data transmission (for example, due to log-scale quantization) over multi-agent networks. We prove \textit{exact} convergence under odd sign-preserving sector-bound nonlinear setups using perturbation-based eigenspectrum analysis. To our knowledge, this is not addressed in the existing literature.
We further address the convergence under possible time variation of the network topology (for example, due to link failure), which results in a non-smooth decrease of the optimality gap over time. Our results advance the state-of-the-art distributed algorithms for machine learning and optimization in terms of real-world networking constraints and the practical challenges of multi-agent processing.

\subsection{Future Directions}
Future research efforts will focus on different machine learning techniques to better model and mitigate the effects of various link nonlinearities, such as convergence under uniform quantization, which is not sector-bound. Applying the proposed methodology to data-driven approaches and decentralized swarm robotic networks (for example, assigning optimal formation setups \cite{chen2023distributed,tase}) opens new insights that may overcome traditional existing challenges. Further, exploring more general non-convex optimization solutions in the presence of link constraints is an ongoing research frontier.

\section*{Acknowledgements}
This work has been supported by the Center for International Scientific Studies \& Collaborations (CISSC), Ministry of Science Research and Technology of Iran.

\bibliographystyle{IEEEbib}
\bibliography{bibliography}

\begin{thebibliography}{10}

\bibitem{cao2021survey}
K.~Cao, S.~Hu, Y.~Shi, A.~W. Colombo, S.~Karnouskos, and X.~Li,
\newblock ``A survey on edge and edge-cloud computing assisted cyber-physical
  systems,''
\newblock {\em IEEE Transactions on Industrial Informatics}, vol. 17, no. 11,
  pp. 7806--7819, 2021.

\bibitem{danilova2022recent}
M.~Danilova, P.~Dvurechensky, A.~Gasnikov, E.~Gorbunov, S.~Guminov,
  D.~Kamzolov, and I.~Shibaev,
\newblock ``Recent theoretical advances in non-convex optimization,''
\newblock in {\em High-Dimensional Optimization and Probability: With a View
  Towards Data Science}, pp. 79--163. Springer, 2022.

\bibitem{tnse_19}
M.~Doostmohammadian and U.~A. Khan,
\newblock ``On the complexity of minimum-cost networked estimation of
  self-damped dynamical systems,''
\newblock {\em IEEE Transactions on Network Science and Engineering}, vol. 7,
  no. 3, pp. 1891--1900, 2019.

\bibitem{dai2020distributed}
P.~Dai, W.~Yu, H.~Wang, G.~Wen, and Y.~Lv,
\newblock ``Distributed reinforcement learning for cyber-physical system with
  multiple remote state estimation under {DoS} attacker,''
\newblock {\em IEEE Transactions on Network Science and Engineering}, vol. 7,
  no. 4, pp. 3212--3222, 2020.

\bibitem{dsvm}
M.~Doostmohammadian, A.~Aghasi, T.~Charalambous, and U.~A. Khan,
\newblock ``Distributed support vector machines over dynamic balanced directed
  networks,''
\newblock {\em IEEE Control Systems Letters}, vol. 6, pp. 758--763, 2021.

\bibitem{luo2015distributed}
L.~Luo, N.~Chakraborty, and K.~Sycara,
\newblock ``Distributed algorithms for multirobot task assignment with task
  deadline constraints,''
\newblock {\em IEEE Transactions on Automation Science and Engineering}, vol.
  12, no. 3, pp. 876--888, 2015.

\bibitem{MiadCons}
H.~Sayyaadi and M.~Moarref,
\newblock ``A distributed algorithm for proportional task allocation in
  networks of mobile agents,''
\newblock {\em IEEE Transactions on Automatic Control}, vol. 56, no. 2, pp.
  405--410, Feb. 2011.

\bibitem{li2023analysis}
X.~Li and P.~Li,
\newblock ``Analysis of error feedback in federated non-convex optimization
  with biased compression: Fast convergence and partial participation,''
\newblock in {\em International Conference on Machine Learning}. PMLR, 2023,
  pp. 19638--19688.

\bibitem{doostmohammadian2023distributed}
M.~Doostmohammadian,
\newblock ``Distributed energy resource management: All-time resource-demand
  feasibility, delay-tolerance, nonlinearity, and beyond,''
\newblock {\em IEEE Control Systems Letters}, 2023.

\bibitem{khojasteh2023distributed}
M.~Khojasteh, P.~Faria, and Z.~Vale,
\newblock ``A distributed robust admm-based model for the energy management in
  local energy communities,''
\newblock {\em Sustainable Energy, Grids and Networks}, vol. 36, pp. 101136,
  2023.

\bibitem{qureshi2023distributed}
M.~I. Qureshi and U.~A. Khan,
\newblock ``Distributed saddle point problems for strongly concave-convex
  functions,''
\newblock {\em IEEE Transactions on Signal and Information Processing over
  Networks}, 2023.

\bibitem{nedic2009distributed}
A.~Nedic and A.~Ozdaglar,
\newblock ``Distributed subgradient methods for multi-agent optimization,''
\newblock {\em IEEE Transactions on Automatic Control}, vol. 54, no. 1, pp.
  48--61, 2009.

\bibitem{kar2012distributed}
S.~Kar, J.~M.~F. Moura, and K.~Ramanan,
\newblock ``Distributed parameter estimation in sensor networks: Nonlinear
  observation models and imperfect communication,''
\newblock {\em IEEE Transactions on Information Theory}, vol. 58, no. 6, pp.
  3575--3605, 2012.

\bibitem{chen2012diffusion}
J.~Chen and A.~H. Sayed,
\newblock ``Diffusion adaptation strategies for distributed optimization and
  learning over networks,''
\newblock {\em IEEE Transactions on Signal Processing}, vol. 60, no. 8, pp.
  4289--4305, 2012.

\bibitem{mateos2014distributed}
D.~Mateos-N{\'u}nez and J.~Cort{\'e}s,
\newblock ``Distributed online convex optimization over jointly connected
  digraphs,''
\newblock {\em IEEE Transactions on Network Science and Engineering}, vol. 1,
  no. 1, pp. 23--37, 2014.

\bibitem{zhang2023accelerated}
S.~Zhang, Z.~Liu, G.~Wen, and Y.~Wang,
\newblock ``Accelerated distributed optimization algorithm with malicious
  nodes,''
\newblock {\em IEEE Transactions on Network Science and Engineering}, 2023.

\bibitem{doostmohammadian2024accelerated}
M.~Doostmohammadian and A.~Aghasi,
\newblock ``Accelerated distributed allocation,''
\newblock {\em IEEE Signal Processing Letters}, 2024.

\bibitem{alghunaim2019distributed}
S.~A. Alghunaim and A.~H. Sayed,
\newblock ``Distributed coupled multiagent stochastic optimization,''
\newblock {\em IEEE Transactions on Automatic Control}, vol. 65, no. 1, pp.
  175--190, 2019.

\bibitem{ojsys}
M.~Doostmohammadian, A.~Aghasi, A.~I. Rikos, A.~Grammenos, E.~Kalyvianaki,
  C.~N. Hadjicostis, K.~H. Johansson, and T.~Charalambous,
\newblock ``Distributed anytime-feasible resource allocation subject to
  heterogeneous time-varying delays,''
\newblock {\em IEEE Open Journal of Control Systems}, vol. 1, pp. 255--267,
  2022.

\bibitem{lian2023distributed}
M.~Lian, Z.~Guo, S.~Wen, and T.~Huang,
\newblock ``Distributed adaptive algorithm for resource allocation problem over
  weight-unbalanced graphs,''
\newblock {\em IEEE Transactions on Network Science and Engineering}, 2023.

\bibitem{scl}
M.~Doostmohammadian, A.~Aghasi, M.~Vrakopoulou, H.~R. Rabiee, U.~A. Khan, and
  T.~Charalambous,
\newblock ``Distributed delay-tolerant strategies for equality-constraint
  sum-preserving resource allocation,''
\newblock {\em Systems \& Control Letters}, vol. 182, pp. 105657, 2023.

\bibitem{rikos2023distributed}
A.~I. Rikos, W.~Jiang, T.~Charalambous, and K.~H. Johansson,
\newblock ``Distributed optimization with gradient descent and quantized
  communication,''
\newblock in {\em Proceedings of 22nd IFAC World Congress}, 2023, pp.
  6433--6439.

\bibitem{garg2020fixed1}
K.~Garg and D.~Panagou,
\newblock ``Fixed-time stable gradient flows: Applications to continuous-time
  optimization,''
\newblock {\em IEEE Transactions on Automatic Control}, vol. 66, no. 5, pp.
  2002--2015, 2020.

\bibitem{garg2020fixed}
K.~Garg, M.~Baranwal, and D.~Panagou,
\newblock ``A fixed-time convergent distributed algorithm for strongly convex
  functions in a time-varying network,''
\newblock in {\em 59th IEEE Conference on Decision and Control (CDC)}. IEEE,
  2020, pp. 4405--4410.

\bibitem{ddsvm}
M.~Doostmohammadian, W.~Jiang, M.~Liaquat, A.~Aghasi, and H.~Zarrabi,
\newblock ``Discretized distributed optimization over dynamic digraphs,''
\newblock {\em IEEE Transactions on Automation Science and Engineering}, 2024,
\newblock Early Access.

\bibitem{xin2021fast}
R.~Xin, U.~A. Khan, and S.~Kar,
\newblock ``A fast randomized incremental gradient method for decentralized
  nonconvex optimization,''
\newblock {\em IEEE Transactions on Automatic Control}, vol. 67, no. 10, pp.
  5150--5165, 2021.

\bibitem{tatarenko2017non}
T.~Tatarenko and B.~Touri,
\newblock ``Non-convex distributed optimization,''
\newblock {\em IEEE Transactions on Automatic Control}, vol. 62, no. 8, pp.
  3744--3757, 2017.

\bibitem{vlaski2021distributed}
S.~Vlaski and A.~H. Sayed,
\newblock ``Distributed learning in non-convex environments--part i: Agreement
  at a linear rate,''
\newblock {\em IEEE Transactions on Signal Processing}, vol. 69, pp.
  1242--1256, 2021.

\bibitem{kao2023localization}
H.~Kao and V.~Subramanian,
\newblock ``Localization and approximations for distributed non-convex
  optimization,''
\newblock {\em Journal of Optimization Theory and Applications}, pp. 1--38,
  2023.

\bibitem{mokhtari2020unified}
A.~Mokhtari, A.~Ozdaglar, and S.~Pattathil,
\newblock ``A unified analysis of extra-gradient and optimistic gradient
  methods for saddle point problems: Proximal point approach,''
\newblock in {\em International Conference on Artificial Intelligence and
  Statistics}. PMLR, 2020, pp. 1497--1507.

\bibitem{yang2022faster}
J.~Yang, A.~Orvieto, A.~Lucchi, and N.~He,
\newblock ``Faster single-loop algorithms for minimax optimization without
  strong concavity,''
\newblock in {\em International Conference on Artificial Intelligence and
  Statistics}. PMLR, 2022, pp. 5485--5517.

\bibitem{scutari2016parallel}
G.~Scutari, F.~Facchinei, and L.~Lampariello,
\newblock ``Parallel and distributed methods for constrained nonconvex
  optimization--part i: Theory,''
\newblock {\em IEEE Transactions on Signal Processing}, vol. 65, no. 8, pp.
  1929--1944, 2016.

\bibitem{hajinezhad2016nestt}
D.~Hajinezhad, M.~Hong, T.~Zhao, and Z.~Wang,
\newblock ``{NESTT}: A nonconvex primal-dual splitting method for distributed
  and stochastic optimization,''
\newblock {\em Advances in neural information processing systems}, vol. 29,
  2016.

\bibitem{barazandeh2021decentralized}
B.~Barazandeh, T.~Huang, and G.~Michailidis,
\newblock ``A decentralized adaptive momentum method for solving a class of
  min-max optimization problems,''
\newblock {\em Signal Processing}, vol. 189, pp. 108245, 2021.

\bibitem{yi2022zeroth}
X.~Yi, S.~Zhang, T.~Yang, and K.~H. Johansson,
\newblock ``Zeroth-order algorithms for stochastic distributed nonconvex
  optimization,''
\newblock {\em Automatica}, vol. 142, pp. 110353, 2022.

\bibitem{bogunovic2018adversarially}
I.~Bogunovic, J.~Scarlett, S.~Jegelka, and V.~Cevher,
\newblock ``Adversarially robust optimization with gaussian processes,''
\newblock {\em Advances in neural information processing systems}, vol. 31,
  2018.

\bibitem{liu2020primer}
S.~Liu, P.~Chen, B.~Kailkhura, G.~Zhang, A.~O. Hero~III, and P.~K. Varshney,
\newblock ``A primer on zeroth-order optimization in signal processing and
  machine learning: Principals, recent advances, and applications,''
\newblock {\em IEEE Signal Processing Magazine}, vol. 37, no. 5, pp. 43--54,
  2020.

\bibitem{maheshwari2022zeroth}
C.~Maheshwari, C.~Chiu, E.~Mazumdar, S.~Sastry, and L.~Ratliff,
\newblock ``Zeroth-order methods for convex-concave min-max problems:
  Applications to decision-dependent risk minimization,''
\newblock in {\em International Conference on Artificial Intelligence and
  Statistics}. PMLR, 2022, pp. 6702--6734.

\bibitem{zhang2022accelerated}
S.~Zhang and C.~P. Bailey,
\newblock ``Accelerated zeroth-order algorithm for stochastic distributed
  non-convex optimization,''
\newblock in {\em American Control Conference (ACC)}. IEEE, 2022, pp.
  4274--4279.

\bibitem{xin2022fast}
R.~Xin, U.~A. Khan, and S.~Kar,
\newblock ``Fast decentralized nonconvex finite-sum optimization with recursive
  variance reduction,''
\newblock {\em SIAM Journal on Optimization}, vol. 32, no. 1, pp. 1--28, 2022.

\bibitem{sun2020improving}
H.~Sun, S.~Lu, and M.~Hong,
\newblock ``Improving the sample and communication complexity for decentralized
  non-convex optimization: Joint gradient estimation and tracking,''
\newblock in {\em International conference on machine learning}. PMLR, 2020,
  pp. 9217--9228.

\bibitem{xin2021improved}
R.~Xin, U.~A. Khan, and S.~Kar,
\newblock ``An improved convergence analysis for decentralized online
  stochastic non-convex optimization,''
\newblock {\em IEEE Transactions on Signal Processing}, vol. 69, pp.
  1842--1858, 2021.

\bibitem{fazel2018global}
M.~Fazel, R.~Ge, S.~Kakade, and M.~Mesbahi,
\newblock ``Global convergence of policy gradient methods for the linear
  quadratic regulator,''
\newblock in {\em International conference on machine learning}. PMLR, 2018,
  pp. 1467--1476.

\bibitem{karimi2016linear}
H.~Karimi, J.~Nutini, and M.~Schmidt,
\newblock ``Linear convergence of gradient and proximal-gradient methods under
  the polyak-{\l}ojasiewicz condition,''
\newblock in {\em Machine Learning and Knowledge Discovery in Databases:
  European Conference, ECML PKDD 2016}. Springer, 2016, pp. 795--811.

\bibitem{zhao2022beer}
H.~Zhao, B.~Li, Z.~Li, P.~Richt{\'a}rik, and Y.~Chi,
\newblock ``Beer: Fast $ o (1/t) $ rate for decentralized nonconvex
  optimization with communication compression,''
\newblock {\em Advances in Neural Information Processing Systems}, vol. 35, pp.
  31653--31667, 2022.

\bibitem{kao2024localization}
H.~Kao and V.~Subramanian,
\newblock ``Localization and approximations for distributed non-convex
  optimization,''
\newblock {\em Journal of Optimization Theory and Applications}, vol. 200, no.
  2, pp. 463--500, 2024.

\bibitem{mancino2023decentralized}
G.~Mancino-Ball, Y.~Xu, and J.~Chen,
\newblock ``A decentralized primal-dual framework for non-convex smooth
  consensus optimization,''
\newblock {\em IEEE Transactions on Signal Processing}, vol. 71, pp. 525--538,
  2023.

\bibitem{lin2024stochastic}
D.~Lin, Y.~Han, H.~Ye, and Z.~Zhang,
\newblock ``Stochastic distributed optimization under average second-order
  similarity: Algorithms and analysis,''
\newblock {\em Advances in Neural Information Processing Systems}, vol. 36,
  2024.

\bibitem{SensNets:Olfati04}
R.~Olfati-Saber and R.~M. Murray,
\newblock ``Consensus problems in networks of agents with switching topology
  and time-delays,''
\newblock {\em IEEE Transactions on Automatic Control}, vol. 49, no. 9, pp.
  1520--1533, Sept. 2004.

\bibitem{olfatisaberfaxmurray07}
R.~Olfati-Saber, J.~A. Fax, and R.~M. Murray,
\newblock ``Consensus and cooperation in networked multi-agent systems,''
\newblock {\em IEEE Proceedings}, vol. 95, no. 1, pp. 215--233, January 2007.

\bibitem{cdc_dtac}
M.~Doostmohammadian, W.~Jiang, and T.~Charalambous,
\newblock ``{DTAC-ADMM}: Delay-tolerant augmented consensus {ADMM}-based
  algorithm for distributed resource allocation,''
\newblock in {\em IEEE 61st Conference on Decision and Control (CDC)}. IEEE,
  2022, pp. 308--315.

\bibitem{mota2013d}
J.~F.~C. Mota, J.~M.~F. Xavier, P.~M.~Q. Aguiar, and M.~P{\"u}schel,
\newblock ``{D-ADMM}: A communication-efficient distributed algorithm for
  separable optimization,''
\newblock {\em IEEE Transactions on Signal processing}, vol. 61, no. 10, pp.
  2718--2723, 2013.

\bibitem{6426252}
N.~H. Vaidya, C.~N. Hadjicostis, and A.~D. Dominguez-Garcia,
\newblock ``Robust average consensus over packet dropping links: Analysis via
  coefficients of ergodicity,''
\newblock in {\em 51st IEEE Conference on Decision and Control}, 2012, pp.
  2761--2766.

\bibitem{cons_drop_siam}
F.~Fagnani and S.~Zampieri,
\newblock ``Average consensus with packet drop communication,''
\newblock {\em SIAM Journal on Control and Optimization}, vol. 48, no. 1, pp.
  102--133, 2009.

\bibitem{stewart_book}
G.~W. Stewart and J.~Sun,
\newblock {\em Matrix perturbation theory},
\newblock Academic Press, 1990.

\bibitem{cai2012average}
K.~Cai and H.~Ishii,
\newblock ``Average consensus on general strongly connected digraphs,''
\newblock {\em Automatica}, vol. 48, no. 11, pp. 2750--2761, 2012.

\bibitem{bhatia2007perturbation}
R.~Bhatia,
\newblock {\em Perturbation bounds for matrix eigenvalues},
\newblock SIAM, 2007.

\bibitem{nonlin}
J.~E. Slotine and W.~Li,
\newblock {\em Applied nonlinear control},
\newblock Prentice hall Englewood Cliffs, NJ, 1991.

\bibitem{taes}
M.~Doostmohammadian,
\newblock ``Single-bit consensus with finite-time convergence: Theory and
  applications,''
\newblock {\em IEEE Transactions on Aerospace and Electronic Systems}, vol. 56,
  no. 4, pp. 3332--3338, 2020.

\bibitem{assran2019stochastic}
M.~Assran, N.~Loizou, N.~Ballas, and M.~Rabbat,
\newblock ``Stochastic gradient push for distributed deep learning,''
\newblock in {\em International Conference on Machine Learning}. PMLR, 2019,
  pp. 344--353.

\bibitem{nedic2018network}
A.~Nedi{\'c}, A.~Olshevsky, and M.~G. Rabbat,
\newblock ``Network topology and communication-computation tradeoffs in
  decentralized optimization,''
\newblock {\em Proceedings of the IEEE}, vol. 106, no. 5, pp. 953--976, 2018.

\bibitem{chen2023distributed}
J.~Chen, Y.~Yang, and S.~Qin,
\newblock ``A distributed optimization algorithm for fixed-time flocking of
  second-order multiagent systems,''
\newblock {\em IEEE Transactions on Network Science and Engineering}, 2023.

\bibitem{tase}
M.~Doostmohammadian, A.~Taghieh, and H.~Zarrabi,
\newblock ``Distributed estimation approach for tracking a mobile target via
  formation of {UAVs},''
\newblock {\em IEEE Transactions on Automation Science and Engineering}, vol.
  19, no. 4, pp. 3765--3776, 2021.

\end{thebibliography}

\begin{IEEEbiography}[{\includegraphics[width=1.1in,clip,keepaspectratio]{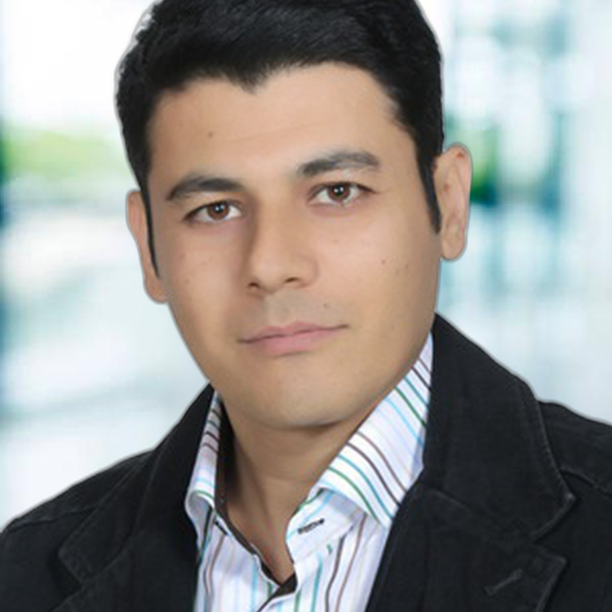}}]{Mohammadreza~Doostmohammadian}
He received his B.Sc. and M.Sc. in Mechanical Engineering from Sharif University of Technology, Iran, respectively, in 2007 and 2010, where he worked on control systems and robotics applications. He received his PhD in Electrical and Computer Engineering from Tufts University, MA, USA 2015. During his PhD in Signal Processing and Robotic Networks (SPARTN) lab, he worked on control and signal processing over networks with applications in social networks. From 2015 to 2017, he was a postdoc researcher at the ICT Innovation Center for Advanced Information and Communication Technology (AICT), School of Computer Engineering, Sharif University of Technology, with research on network epidemic, distributed algorithms, and complexity analysis of distributed estimation methods. He was a researcher at the Iran Telecommunication Research Center (ITRC), Tehran, Iran, in 2017, working on distributed control algorithms and estimation over IoT. Since 2017, he has been an assistant professor in the Mechatronics Department at Semnan University, Iran. He was a visiting researcher at the School of Electrical Engineering and Automation, Aalto University, Espoo, Finland, working on constrained and unconstrained distributed optimization techniques and their applications. His general research interests include distributed estimation, control, learning, and network optimization. He was the chair of the robotics and control session at the ISME-2018 conference, the session chair at the 1st Artificial Intelligent Systems Conference of Iran, 2022, and the IPC member and Associate Editor of the IEEE/IFAC CoDIT2024 conference.
\end{IEEEbiography}

\begin{IEEEbiography}[{\includegraphics[width=1.05in,clip,keepaspectratio]{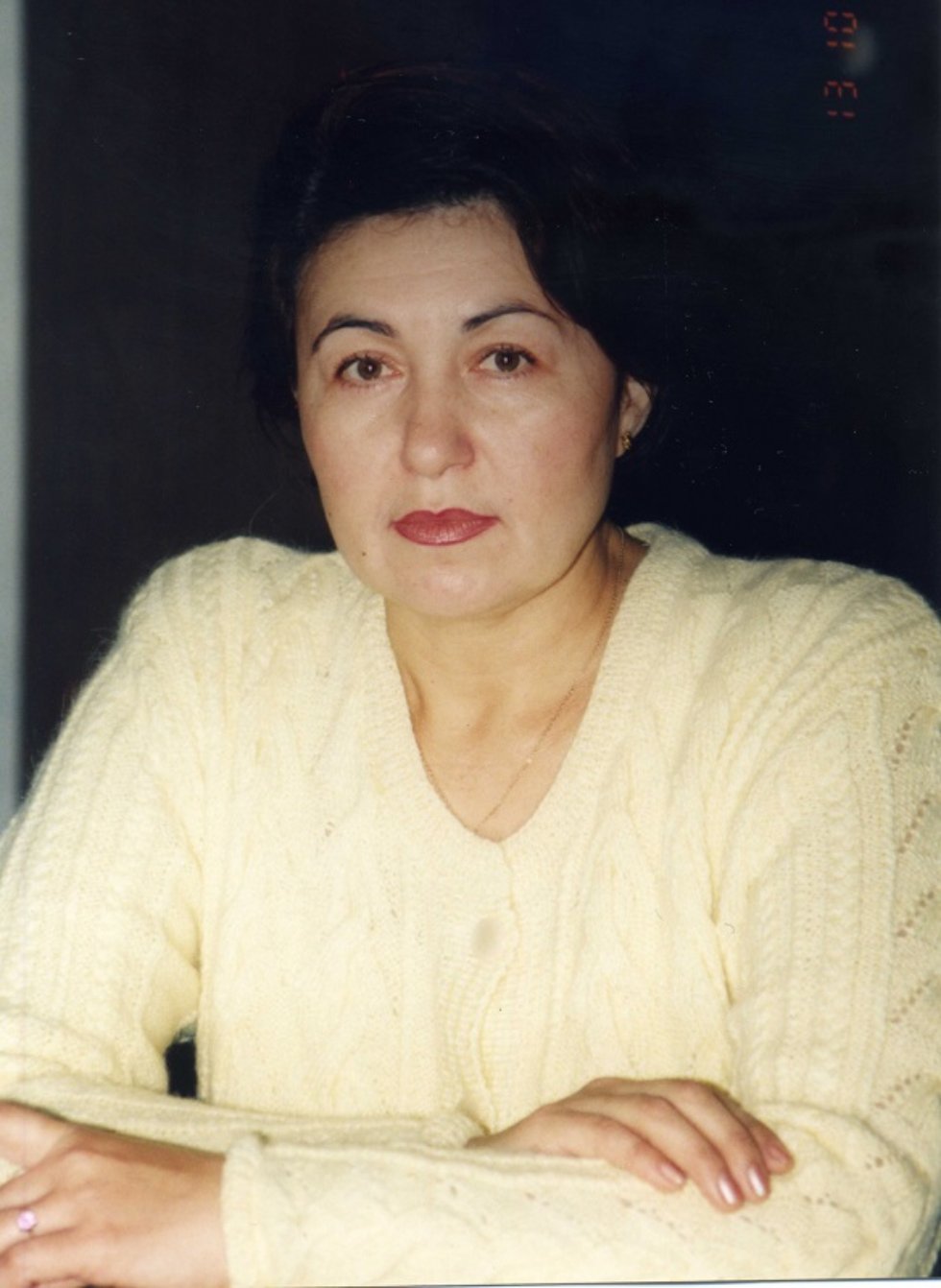}}]{Zulfiya~R.~Gabidullina} graduated with honours from the Faculty of
Computational Mathematics of the Kazan State University (Russia) on a specialty “Economic Cybernetics” in 1982 and received a qualification: Economist-Mathematician. From 1982 to 1984, she worked as an engineer-mathematician at the Computer Center of Kazan State University. From 1984 to 1987, she received an After-college education at Kazan State University, specializing in “Computational Mathematics”. From 1987 to 1993 and from 1993
to 1996, she was an Assistant and Senior Lecturer
(Department of Economic Cybernetics at the Faculty of Computational Mathematics and Cybernetics of Kazan State University). In 1994, she defended the candidate's dissertation and received the Candidate of Physics-Mathematical Sciences degree. Since 1997, she has had the academic title of Associate Professor. Since 1996, she has worked as an Associate professor at the Department of Data Analysis and Programming Technologies, Institute of Computational Mathematics and Information Technologies, Kazan Federal University, Kazan, Russia. She has taken internships at Moscow State University (Russia) and DePaul University (Chicago, Illinois, USA).
She was the Session Chair: The 21st International Symposium on Mathematical Programming, Berlin, Germany, ISMP-2012.
Her main research interests include optimization methods and their applications, machine learning, and intellectual data analysis.
\end{IEEEbiography}

\begin{IEEEbiography}[{\includegraphics[width=1.1in]{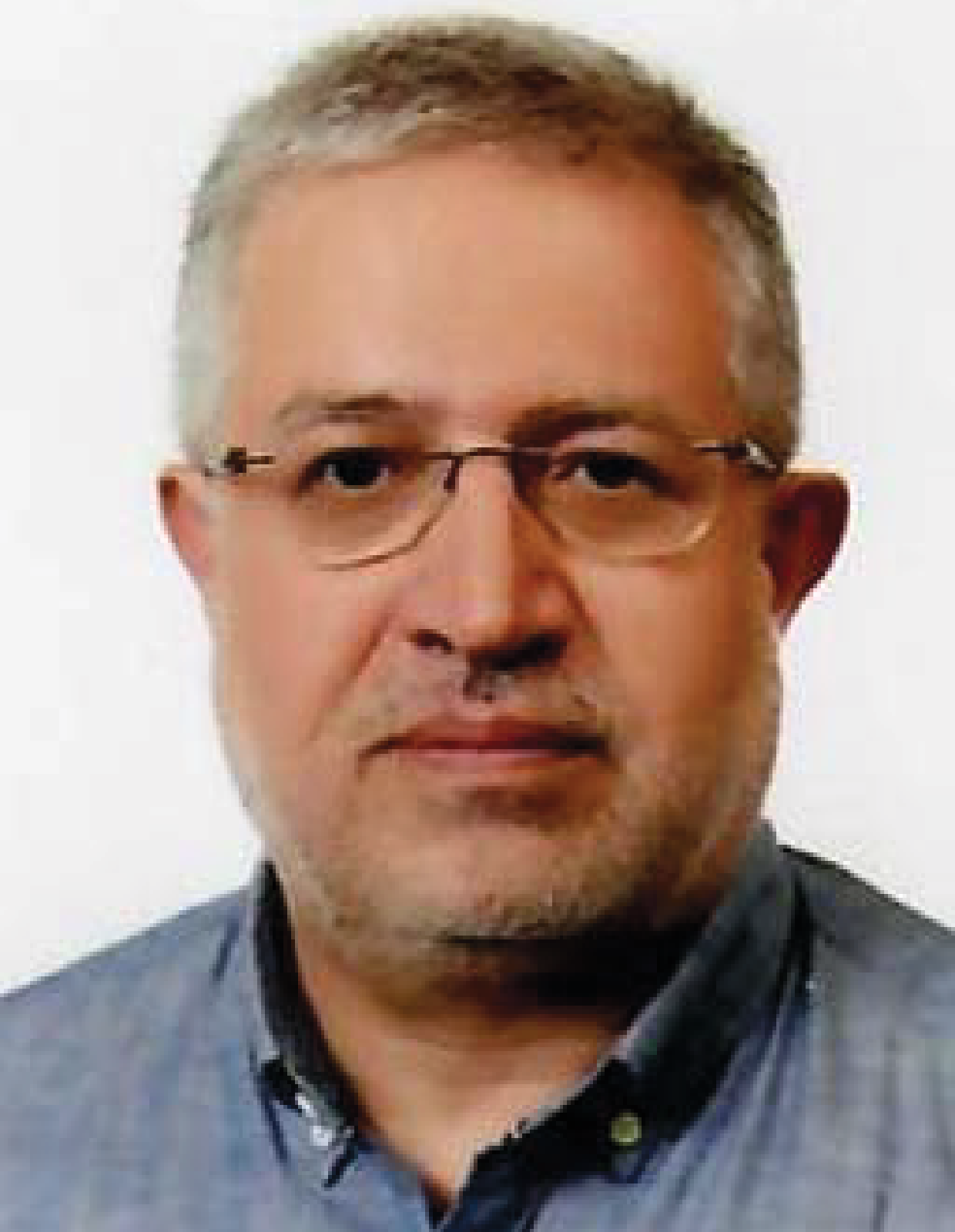}}]{Hamid R. Rabiee}
	received his BS and MS degrees (with Great Distinction) in Electrical Engineering from CSULB, Long Beach, CA (1987, 1989), his EEE degree in Electrical and Computer Engineering from USC, Los Angeles, CA (1993), and his Ph.D. in Electrical and Computer Engineering from Purdue University, West Lafayette, IN, in 1996. From 1993 to 1996, he was a Member of Technical Staff at AT\&T Bell Laboratories. From 1996 to 1999, he worked as a Senior Software Engineer at Intel Corporation. He was also with PSU, OGI and OSU universities as an adjunct professor of Electrical and Computer Engineering from 1996-2000. Since September 2000, he has joined the Sharif University of Technology, Tehran, Iran. He was also a visiting professor at the Imperial College of London for the 2017-2018 academic year. He is the founder of Sharif University Advanced Information and Communication Technology Research Institute (AICT), ICT Innovation Center, Advanced Technologies Incubator (SATI), Digital Media Laboratory (DML), Mobile Value Added Services Laboratory (VASL), Bioinformatics and Computational Biology Laboratory (BCB) and Cognitive Neuroengineering Research Center. He has also been the founder of many successful High-Tech start-up companies in the field of ICT as an entrepreneur. He is currently a Professor of Computer Engineering at Sharif University of Technology and Director of AICT, DML, and VASL. He has been the initiator and director of many national and international level projects in the context of Iran's National ICT Development Plan and UNDP International Open Source Network (IOSN). He has received numerous awards and honors for his Industrial, scientific, and academic contributions. He has acted as chairman in a number of national and international conferences and holds three patents. He is also a Member of IFIP Working Group 10.3 on Concurrent Systems and a Senior Member of IEEE. His research interests include statistical machine learning, Bayesian statistics, data analytics, and complex networks with applications in social networks, multimedia systems, cloud and IoT privacy, bioinformatics, and brain networks.
\end{IEEEbiography}

\end{document}